\pdfoutput=1
\documentclass[lettersize,journal]{IEEEtran}

\usepackage{amsmath,amsfonts,amssymb}
\usepackage{algorithmic}
\usepackage{algorithm}
\usepackage{array}
\usepackage{multirow}
\usepackage[caption=false,font=normalsize,labelfont=sf,textfont=sf]{subfig}
\usepackage{textcomp}
\usepackage{stfloats}
\usepackage{url}
\usepackage{verbatim}
\usepackage{graphicx}
\usepackage{cite}
\usepackage{booktabs}
\usepackage{tikz}
\usetikzlibrary{arrows.meta,calc,fit,positioning}
\newtheorem{assumption}{Assumption}
\newtheorem{definition}{Definition}
\newtheorem{lemma}{Lemma}
\newtheorem{theorem}{Theorem}
\newtheorem{proposition}{Proposition}
\newtheorem{corollary}{Corollary}
\newtheorem{remark}{Remark}

\graphicspath{{figures/}}

\begin{document}

\title{CC-OPI: Online Distributed Task Allocation for UAV Swarms under Communication Constraints}

\author{Liu Biao and Zhang Tong}

\maketitle

\begin{abstract}
In multi-robot missions such as post-disaster search and rescue, a short communication range fragments a swarm of Unmanned Aerial Vehicles (UAVs) into transient information islands. Under such intermittent connectivity, the prevailing ``allocate-then-execute'' paradigm---which requires global consensus before any physical movement---breaks down. This paper proposes the Communication-Constrained Online Performance Impact (CC-OPI) algorithm, an event-driven method that interleaves task negotiation with physical execution. CC-OPI replans only at discrete physical and topological events and integrates two further elements. The first is a pair of cost-evaluation metrics adapted to dynamic topologies---one with a spatial locality penalty that promotes regionalized operation, the other with a deadline-aware urgency term---complemented by a non-preemptive state lock that shields each UAV's ongoing action. The second is a decentralized fault-tolerance layer that pairs version-based state synchronization with a global-time-driven emergency pool. We establish that CC-OPI terminates in finite time, free of stale-completion deadlock and of unbounded reassignment within the mission horizon. In simulations at a $250$~m communication radius, CC-OPI sustains a task completion rate of about $0.80$: it leads a matched online execution of the unmodified Performance Impact (PI) and Consensus-Based Bundle Algorithm (CBBA) rules by about seven percentage points, exceeds the naively transferred static baselines by roughly $20$ points, and remains within several points of PI and CBBA under full connectivity. Within the tested settings, CC-OPI degrades gracefully as connectivity weakens and absorbs packet loss, terrain occlusion, and runtime task arrival. The price is more messages and some redundant travel---a deliberate trade-off of efficiency for robustness.
\end{abstract}

\begin{IEEEkeywords}
Distributed task allocation, multi-UAV systems, intermittent connectivity, communication-constrained coordination, online planning, event-driven replanning, deadline-aware scheduling, multi-agent systems.
\end{IEEEkeywords}

\section{Introduction}\label{sec:introduction}
\IEEEPARstart{M}{ulti-Agent} Systems (MASs) are increasingly central to complex, time-sensitive missions such as post-disaster search and rescue \cite{queralta2020collaborative, erdelj2017help} or automated warehouse logistics \cite{wurman2008coordinating}. At the core of such collaborative operations lies the Single-Task, Single-Robot, Time-Extended Assignment (ST-SR-TA) problem \cite{gerkey2004formal, korsah2013comprehensive}, in which agents autonomously construct sequential execution paths to serve spatially distributed tasks under strict deadline constraints. In real-world deployments, meeting these time windows while sustaining high overall execution efficiency is often decisive for mission success.

To address the inherent computational complexity of task allocation \cite{gerkey2004formal}, distributed heuristic algorithms---most notably the Consensus-Based Bundle Algorithm (CBBA) \cite{choi2009consensus} and the Performance Impact (PI) algorithm \cite{zhao2015heuristic}---have been studied extensively. These methods attain near-optimal solutions through decentralized coordination among autonomous agents. However, their convergence guarantees rest on the network staying connected during negotiation---or, in the weaker dynamic-topology variants, on the union of the communication graphs over a bounded window remaining connected, so that every message eventually propagates network-wide. Accordingly, these methods typically follow a static, offline ``allocate-then-execute'' paradigm, in which agents reach a global consensus through multiple rounds of communication before any physical movement begins \cite{otte2020auctions}.

In post-disaster deployments, damaged infrastructure and limited onboard hardware often restrict the communication range of Unmanned Aerial Vehicles (UAVs) to a short distance (e.g., 250~m) \cite{hayat2016survey, asadpour2013characterizing, mozaffari2019tutorial, zeng2016wireless}. As a result, the global communication network is frequently fragmented into multiple dynamic ``information islands'' \cite{vahdat2000epidemic}. This spatial isolation creates a fundamental paradox: information can be exchanged only after UAVs physically move to reconfigure the network topology, yet such connectivity-seeking movement competes with the timely execution of tasks. The resulting execution dilemma is twofold. Prioritizing task execution alone leads to prolonged disconnection, leaving local state views outdated and causing tasks to be omitted; conversely, deviating from efficient trajectories to preserve connectivity incurs deadline violations. Consequently, applying the traditional static planning paradigm in such intermittently connected networks induces blind cross-regional movement and persistent assignment oscillation, degrading overall system performance. Fig.~\ref{fig:mot_islands} illustrates this predicament.

\begin{figure*}[!t]
\centering
\subfloat[Information islands under a short communication radius $R_c$]{%
    \includegraphics[width=0.9\columnwidth]{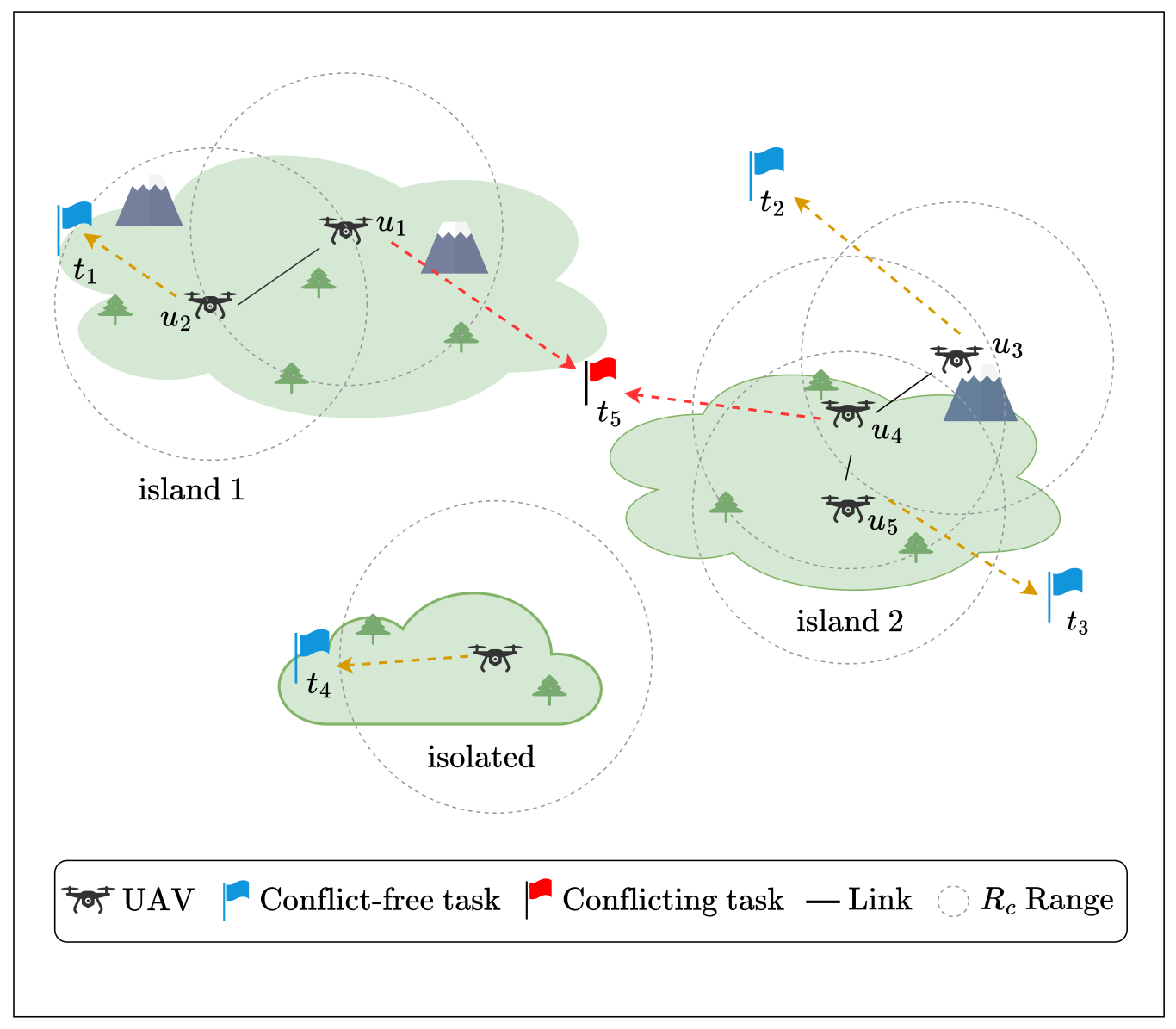}%
    \label{fig:mot_islands}}
\hfil
\subfloat[Static ``allocate-then-execute'' versus CC-OPI's online ``allocate-while-executing'' paradigm]{%
    \includegraphics[width=0.9\columnwidth]{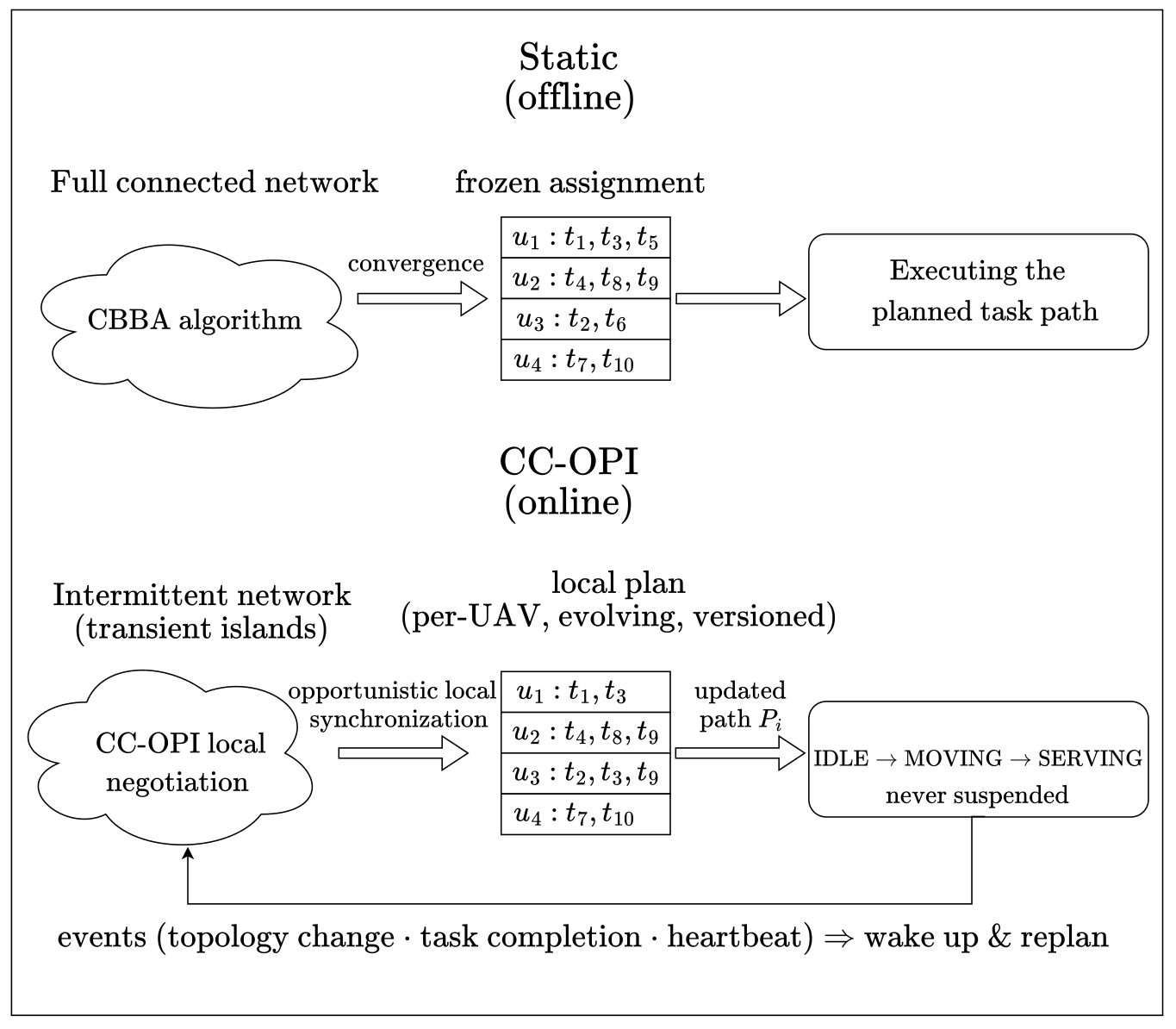}%
    \label{fig:mot_paradigm}}
\caption{Motivating scenario and paradigm comparison.}
\label{fig:motivation}
\end{figure*}

To overcome these limitations, this paper proposes the Communication-Constrained Online Performance Impact (CC-OPI) algorithm, which introduces an event-driven ``allocate-while-executing'' online paradigm (Fig.~\ref{fig:mot_paradigm}). By decoupling algorithmic consensus from continuous connectivity, CC-OPI allows UAVs to sustain implicit coordination---conflict-reducing cooperation that emerges from a spatial partitioning induced by a spatial locality penalty, without explicit message exchange---even during communication disruptions. Specifically, it redefines the Insertion Performance Impact (IPI) with this penalty, which encourages the UAV swarm to adopt a regionalized operating pattern. In addition, it reformulates the Removal Performance Impact (RPI) to prioritize near-deadline tasks, and pairs it with a non-preemptive state-locking mechanism that shields each UAV's ongoing physical action, thereby suppressing assignment oscillation when partitioned subnets merge.

The main contributions of this paper are summarized as follows:
\begin{enumerate}
    \item \textbf{An Online Allocation Framework for Persistent Partitioning:} We carry event-driven replanning during execution---itself an established idea in asynchronous distributed allocation (Section~\ref{sec:related})---into the regime in which the network may remain partitioned throughout the mission, with no joint-connectivity guarantee. Rather than separating planning from execution as in static paradigms, the framework allows UAVs' physical movement and task negotiation to proceed asynchronously and concurrently.
    \item \textbf{Cost-Evaluation Mechanisms Adapted to Dynamic Topologies:} We design marginal performance-evaluation functions for dynamic network topologies. These functions center on a spatial locality penalty, retain deadline-aware urgency weighting as part of the overall scoring design, and are complemented by a non-preemptive state-locking mechanism that protects ongoing physical actions, mitigating blind cross-regional task chasing and assignment oscillation under intermittent connectivity.
    \item \textbf{Decentralized State Synchronization and Fault Tolerance:} We develop a decentralized state-synchronization and fault-tolerance mechanism. Building on a monotonic, version-based consistency protocol and a global-time-driven Emergency Pool, it averts the stale-information coordination deadlocks and task omissions that arise in fragmented networks.
\end{enumerate}

The remainder of this paper is organized as follows. Section~\ref{sec:related} reviews related work on distributed task allocation and multi-agent coordination under communication constraints. Section~\ref{problem} presents the system model and formally defines the optimization problem. Section~\ref{sec:cc-opi} details the core mechanisms and design of the proposed CC-OPI algorithm. Section~\ref{sec:theoretical} establishes the finite-time termination of the algorithm and its freedom from stale-completion deadlock and unbounded reassignment, together with its computational complexity. Section~\ref{sec:simulation} evaluates the algorithm through extensive simulation experiments. Section~\ref{sec:conclusion} concludes the paper and outlines future research directions.

\section{Related Work}
\label{sec:related}

\subsection{Distributed Task Allocation}
Early work on multi-agent task allocation relied largely on centralized optimization \cite{kuhn1955hungarian}, which achieves global optimality in principle but scales poorly and is vulnerable to single points of failure; the field has therefore shifted toward distributed coordination, in which market-based auction mechanisms are a mainstream solution \cite{smith1980contract, dias2006market}. Among distributed auction approaches, CBBA \cite{choi2009consensus} and the PI algorithm \cite{zhao2015heuristic} are the classical baselines. CBBA couples distributed auctions with a consensus protocol: agents iteratively build task bundles from marginal scores and resolve conflicts through predefined consensus rules. The PI algorithm instead evaluates the marginal impact of a candidate task on an agent's current local path, alternating inclusion and removal phases to reach a conflict-free assignment. A later extension, referred to as PI-MaxAss \cite{turner2017distributed}, augments PI with a two-stage rescheduling procedure that explicitly maximizes the number of assigned tasks, improving task coverage when the network is fully connected; more recent distributed methods further pursue this maximum-assignment objective \cite{wang2024efficient}. Task allocation under temporal and ordering constraints has been systematically categorized \cite{nunes2017taxonomy}, and deadline-constrained distributed assignment has been studied explicitly \cite{luo2015distributed}; across this line of work, auction-based methods converge to near-optimal solutions using only local, iterative communication \cite{ka2024systematic}.

These algorithms do not, strictly speaking, require a complete communication graph: PI was evaluated on sparse local topologies such as row, star, and mesh networks \cite{zhao2015heuristic}, and the convergence theory of CBBA admits dynamic networks, provided the union of the communication graphs over a bounded window remains connected \cite{choi2009consensus}. What they share is the weaker but still demanding premise that sufficient information propagation is achieved---through joint connectivity or eventually reliable message delivery---within the negotiation horizon. Both analytical and experimental studies show that once communication falls below this level, the solution quality of such static auction and consensus methods degrades substantially \cite{otte2020auctions, nayak2020experimental}. The ``allocate-then-execute'' paradigm, moreover, requires agents to reach a global assignment consensus before any physical movement, and is fragile under strict communication-range limits and frequent network partitioning: agents isolated in separate information islands may, from outdated local views, admit the same task into multiple execution paths or fail to respond to time-critical tasks altogether.

\subsection{Task Allocation under Communication Constraints}
Research on task allocation under restricted communication in MASs follows two main directions. The first seeks to maintain continuous network connectivity throughout the mission \cite{zavlanos2011graph, ponda2012distributed}, typically by deploying dedicated communication relay nodes or imposing line-of-sight and distance constraints on agent trajectories. In environments with severely limited communication radii (e.g., $R_c \le 250$~m) and complex terrain, however, enforcing continuous connectivity is costly: it restricts the physical coverage of the UAV swarm and effectively reduces high-performance rescue UAVs to mobile relays, undermining the primary mission objectives.

The second direction---which this paper also adopts---relaxes continuous connectivity and instead exploits local or opportunistic communication, for example through optimized local task swaps \cite{liu2015communication} or the Delay-Tolerant Network paradigm \cite{fall2003delay}. In this setting, agents act on local, possibly stale information and synchronize state only during opportunistic physical encounters. Asynchronous consensus auctions, notably ACBBA \cite{johnson2011asynchronous}, likewise remove the synchronized-round structure and allow agents to negotiate over locally timed message exchanges; they still presume, however, that messages are eventually delivered across the network, and they do not account for the physical commitment a UAV accumulates while executing. These relaxed-communication strategies share a further fundamental limitation---their marginal performance-evaluation metrics remain ``myopic'' with respect to the disconnected state of the network, implicitly treating an isolated view as the global ground truth. This lack of partition awareness has two consequences. First, agents in different information islands tend to lock onto the same remote, high-value task, causing spatial conflicts and redundant deployment. Second, when disconnected subnets merge, consensus resolves conflicts using only the current numerical bids, disregarding the movement cost already invested; the resulting preemption induces persistent assignment oscillation and often causes preempted agents to miss the deadlines of alternative tasks.

Recent work extends these directions along four axes. Under intermittent or unreliable links, communication-aware consensus auctions weigh the value of each exchange against its cost before transmitting \cite{raja2022communication}, and dynamic allocation protocols have been hardened against stale, asymmetric views of the assignment \cite{zhao2026assignment}. Event-triggered coordination places replanning-on-significant-change on a principled footing \cite{nowzari2019event}, and bundle-auction extensions replan locally when time-sensitive tasks change during a mission \cite{chen2022local}. A third axis co-designs motion and communication, either by steering robots to restore connectivity intermittently under temporal-logic task specifications \cite{kantaros2019temporal} or through allocation rules that themselves induce connectivity in a flying ad hoc network \cite{leong2024scalable}. A fourth axis strengthens systems validation, coupling robot and network simulators \cite{acharya2020cornet} and carrying swarm autonomy into urban field experimentation \cite{chung2023darpa}. Across these axes, however, reconciliation is still assumed to complete within bounded contact windows, or connectivity is treated as an objective the robots must actively restore; in either case the allocation scores remain decoupled from the physical commitment accumulated during execution---the precise combination that the persistently partitioned regime demands.

In contrast, the proposed CC-OPI framework targets the regime in which no joint-connectivity guarantee holds: the network may remain partitioned for a large fraction of the mission, the topology changes are induced by the UAVs' own task-driven movement, and state reconciliation is therefore opportunistic and must proceed while physical execution continues. To operate in this regime, CC-OPI gives UAVs an implicit-coordination capability during communication disruptions. A dynamic spatial locality penalty factor $\mu$ guides the swarm toward regionalized operation, which discourages blind cross-regional assignment and reduces the overlapping claims that drive assignment oscillation when fragmented subnets merge. A non-preemptive state-locking mechanism complements this spatial regularization by shielding a UAV's ongoing service and its final, irreversible approach during opportunistic consensus. We note that spatial regularization, non-preemptive locking, monotone gossip, and deadline-triggered prioritization each have precedents in neighboring literatures; the contribution claimed here is their disciplined integration into a single online allocation protocol, together with the specific formulations---the fixed deployment anchor and the conditional lifting of the spatial penalty for unattended emergencies---that make the combination effective under persistent partitioning. A capability mapping of the cited methods across six dimensions---connectivity premise, planning--execution coupling, commitment protection, completion-state dissemination, runtime task arrival, and communication pattern---is tabulated in the supplementary material; a static geographic-partition baseline representing the territory-based alternative is evaluated experimentally in Section~\ref{sec:simulation}.

\subsection{Online Planning and Dynamic Adaptation}
Practical deployments, particularly search and rescue, must also cope with the unpredictable emergence of new tasks (e.g., newly discovered trapped victims). A common approach in traditional distributed allocation handles such dynamics through periodic or event-triggered global replanning under a ``time-freezing'' abstraction, which suspends the physical movement of the entire UAV swarm until a new round of consensus has converged \cite{chen2019distributed}. This again presumes a persistent, fully connected network: under partitioning, the global ``pause'' or ``replan'' signal cannot reach isolated subnets, so newly emerged time-critical tasks are overlooked or assigned in conflict.

CC-OPI instead dispenses with the staged, static planning paradigm and adopts an event-driven, decentralized execution model \cite{heemels2012introduction, johnson2011asynchronous}: a task that emerges mid-mission is treated as an ordinary local trigger, processed with the same priority as a topology change or a task completion, without requiring network-wide suspension. We therefore use dynamically arriving tasks as an additional scenario to assess robustness rather than as a core capability.

\section{Problem Formulation}
\label{problem}

\subsection{System Model}
Consider a post-disaster search and rescue scenario in which a swarm of heterogeneous UAVs is deployed to deliver critical supplies (e.g., medical kits or food) to spatially distributed trapped individuals. The set of UAVs is denoted by $\mathcal{U} = \{u_1, u_2, \dots, u_N\}$, where each UAV $u_i \in \mathcal{U}$ is described by the tuple:
\begin{equation}
u_i \triangleq \langle \text{type}_i, v_i, C_i, p_i(t) \rangle.
\end{equation}
Here, $\text{type}_i \in \{\text{medicine}, \text{food}\}$ is the supply type that $u_i$ can transport; $v_i$ is its constant cruising speed; $C_i$ is the maximum payload capacity, defined as the upper bound on the number of tasks in its execution path; and $p_i(t) \in \mathbb{R}^2$ is its position at time $t$.

The set of search and rescue tasks is denoted by $\mathcal{T} = \{\tau_1, \tau_2, \dots, \tau_M\}$, where each task $\tau_k \in \mathcal{T}$ is described by the tuple:
\begin{equation}
\tau_k \triangleq \langle \text{type}_k, p_k, s_k, D_k \rangle.
\end{equation}
Here, $\text{type}_k \in \{\text{medicine}, \text{food}\}$ is the required supply type; $p_k \in \mathbb{R}^2$ is the task location; $s_k$ is the deterministic service time required upon arrival; and $D_k$ is the hard deadline. A task $\tau_k$ is completed if and only if a type-compatible UAV $u_i$ (i.e., $\text{type}_i = \text{type}_k$) arrives at $p_k$ at a time $A_{i,k}$ satisfying $A_{i,k} \le D_k$, with equality counting as on time; service may finish after the deadline. Each task is served by at most one UAV: a UAV that arrives at a task already completed, or under service by a peer, abandons it, so no task is served concurrently or repeatedly.

\subsection{Distance-Dependent Communication Network}
\label{subsec:comm_network}
In post-disaster environments, establishing and maintaining a persistent, fully connected communication backbone is generally impractical because of the destruction of ground infrastructure and the limited transmission power of mobile UAVs. The UAV swarm must therefore rely on opportunistic short-range peer-to-peer communication links. We model this dynamic topology as a time-varying undirected graph $\mathcal{G}(t) = (\mathcal{U}, \mathcal{E}(t))$, where $\mathcal{U}$ is the set of UAVs and $\mathcal{E}(t)$ is the set of active communication edges at time $t$.

A communication link between two UAVs is determined by their spatial proximity. Specifically, a communication edge $e_{ij}(t) \in \mathcal{E}(t)$ exists between UAVs $u_i$ and $u_j$ if and only if their Euclidean distance does not exceed the maximum communication radius $R_c$:
\begin{equation}
\mathcal{E}(t) = \{e_{ij}(t) \mid \|p_i(t) - p_j(t)\| \le R_c, \forall u_i, u_j \in \mathcal{U}, i \neq j\},
\end{equation}
where $R_c$ is a hardware-specific constant representing the physical limit of reliable data transmission.

Because the operational workspace is large relative to the communication radius $R_c$, the graph $\mathcal{G}(t)$ is typically sparse and frequently disconnected throughout task execution, partitioning the UAV swarm into disjoint information islands. Global synchronization is therefore infeasible \cite{fall2003delay}, and each UAV $u_i$ must maintain an independent local state view $S_i(t)$, which is subject to information lag and may deviate from the true global state, especially regarding UAVs and tasks in remote subgraphs.

\subsection{Task Execution Kinematics}
Each UAV $u_i$ maintains an ordered execution path $P_i = [\tau_{i,1}, \tau_{i,2}, \dots, \tau_{i,|P_i|}] \subseteq \mathcal{T}$, subject to the payload capacity constraint $|P_i| \le C_i$. This ordered structure couples the UAV's spatial trajectory with its temporal schedule.

Execution timing is governed by the estimated arrival times of the tasks along the path. Evaluated at the current physical time $t$, the estimated arrival time $A_{i,k}$ of UAV $u_i$ at the $k$-th task $\tau_{i,k}$ in its path $P_i$ is computed recursively. For the first task in the path ($k=1$), the arrival time depends on the current position $p_i(t)$ of the UAV and its constant cruising speed $v_i$:
\begin{equation}
A_{i,1} = t + \frac{\|p_i(t) - p_{\tau_{i,1}}\|}{v_i},
\end{equation}
where $p_{\tau_{i,1}}$ denotes the position of task $\tau_{i,1}$. For any subsequent task ($k > 1$), the arrival time is the sum of the arrival time at the preceding task, its deterministic service time $s_{\tau_{i,k-1}}$, and the flight time between the two successive task locations:
\begin{equation}
A_{i,k} = A_{i,k-1} + s_{\tau_{i,k-1}} + \frac{\|p_{\tau_{i,k-1}} - p_{\tau_{i,k}}\|}{v_i}.
\end{equation}

Because rescue operations are time-critical, deadline satisfaction is mandatory. A path $P_i$ is feasible if and only if the UAV arrives no later than the hard deadline of every assigned task:
\begin{equation}
A_{i,k} \le D_{\tau_{i,k}}, \quad \forall \tau_{i,k} \in P_i.
\end{equation}
Any path that violates this condition is treated as infeasible during allocation.

\subsection{Optimization Objective and Challenge}
Assuming a centralized controller with global knowledge of all UAVs and tasks, the task allocation problem can be cast as a lexicographic multi-objective optimization. The primary objective maximizes the number of assigned tasks \cite{turner2017distributed}, while the secondary objective improves overall operational efficiency. Let $x_{i,k} \in \{0, 1\}$ be a binary decision variable, where $x_{i,k} = 1$ if task $\tau_k$ is assigned to UAV $u_i$, and $x_{i,k} = 0$ otherwise. The global optimization objective is defined as:
\begin{equation}
    \text{lex} \min_{\mathbf{X}} \left[ -\mathcal{F}_1(\mathbf{X}), \mathcal{F}_2(\mathbf{X}) \right]
\end{equation}
subject to:
\begin{align}
    \sum_{i=1}^N x_{i,k} \le 1, & \quad \forall \tau_k \in \mathcal{T}, \\
    |P_i| \le C_i, & \quad \forall u_i \in \mathcal{U}, \\
    x_{i,k} = 1 \Rightarrow \text{type}_i = \text{type}_k, & \quad \forall u_i \in \mathcal{U}, \tau_k \in \mathcal{T}, \\
    x_{i,k} = 1 \Rightarrow A_{i,k} \le D_k, & \quad \forall u_i \in \mathcal{U}, \tau_k \in \mathcal{T},
\end{align}
where $\mathcal{F}_1 = \sum_{i=1}^N \sum_{k=1}^M x_{i,k}$ denotes the total number of assigned tasks---each guaranteed to be completed on time by the deadline constraint---and $\mathcal{F}_2 = \sum_{i=1}^N \sum_{\tau_k \in P_i} A_{i,k}$ denotes the MinSum objective \cite{zhao2015heuristic}, i.e., the aggregate arrival time over all assigned tasks.

However, under the communication limitations described in Section~\ref{subsec:comm_network}, this centralized optimum is unattainable in practice. Traditional distributed heuristics (e.g., CBBA \cite{choi2009consensus} and PI \cite{zhao2015heuristic}) require the topology graph $\mathcal{G}(t)$ to remain connected, or at least sufficiently periodically connected, to guarantee global state convergence; under the fragmentation described above, static consensus cannot achieve network-wide synchronization, and the inconsistent local views $S_i(t)$ lead to conflicting assignments or omitted tasks.

This dilemma is the primary motivation for the CC-OPI algorithm: rather than pursuing a static global optimum, we develop a fully decentralized, event-driven online mechanism. By allowing UAVs to evaluate and synchronize their assignment states concurrently with physical execution, CC-OPI seeks to approximate the global optimum despite persistent and unpredictable communication disruptions.

Two clarifications connect this formulation to the evaluation in Section~\ref{sec:simulation}. First, the lexicographic objective is a centralized \emph{reference} objective rather than a quantity any UAV optimizes directly: CC-OPI optimizes local marginal surrogates of $\mathcal{F}_1$ and $\mathcal{F}_2$ (Section~\ref{sec:cc-opi}), and under stale views two partitions may transiently assign the same task, so the number of locally assigned feasible tasks need not equal the number of tasks uniquely completed. The empirical counterpart of $\mathcal{F}_1$ is therefore the realized task completion rate---the fraction of tasks reached by their deadline and served exactly once, uniqueness following from the single-service rule above and completion knowledge spreading through the synchronization layer of Section~\ref{sec:cc-opi}. Second, a task that no type-compatible UAV could reach in time even by direct flight admits no feasible assignment under the deadline constraint; such tasks are excluded from the completion-rate denominator, and dynamically arriving tasks, evaluated from the swarm state at their arrival, are treated by the same rule (Section~\ref{sec:simulation}).

\section{The CC-OPI Algorithm}
\label{sec:cc-opi}

To address the communication limitations and pursue the optimization objectives established in Section~\ref{problem}, this section introduces the CC-OPI algorithm. CC-OPI restructures the task allocation process, replacing the traditional static paradigm with an event-driven online execution model in which task evaluation and state synchronization are triggered by discrete physical and topological changes rather than by synchronous iteration. By reformulating the marginal performance-evaluation metrics and adding a decentralized fault-tolerance mechanism, CC-OPI allows UAVs to approximate the global optimization objectives through local interaction and implicit coordination, even under persistent communication disruptions.

\subsection{Event-Driven Online Execution Paradigm}
Traditional distributed task allocation relies on network-wide information propagation to reach the global consensus that drives each algorithm iteration. In networks fragmented by communication constraints, this reliance leaves each UAV with local or outdated information, producing cascading iteration errors and, ultimately, allocation failures. To avoid this, CC-OPI adopts an asynchronous, event-driven architecture that decouples the physical dynamics of the UAVs from the negotiation process. Under this paradigm, physical execution (e.g., flying toward a target or performing a rescue) is never suspended to wait for network synchronization. Instead, each UAV continuously follows its locally maintained execution path and replans only when a specific local event occurs. Fig.~\ref{fig:arch} depicts this architecture at a single UAV.

\begin{figure}[!t]
\centering
\includegraphics[width=\columnwidth]{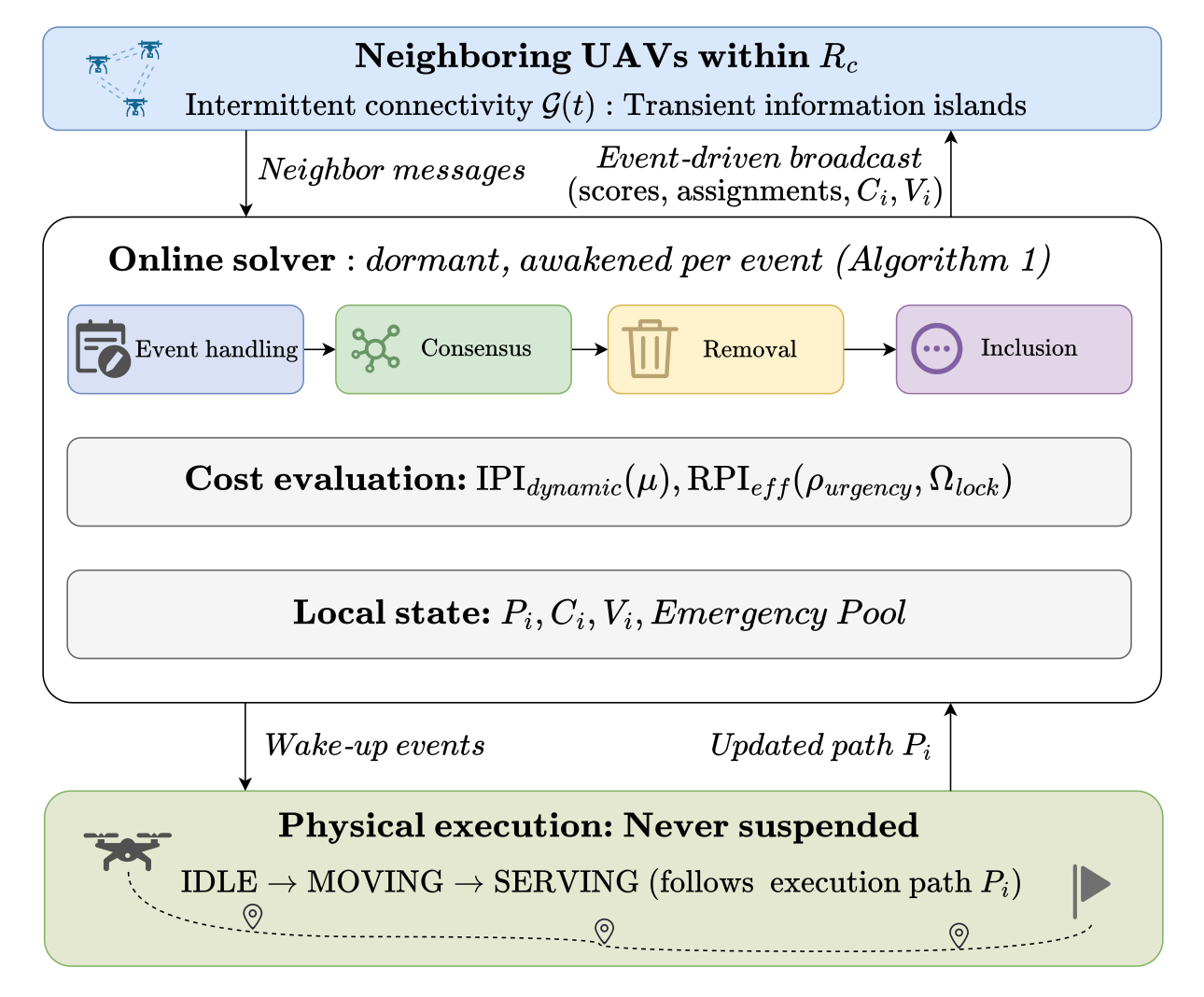}
\caption{The CC-OPI architecture at a single UAV.}
\label{fig:arch}
\end{figure}

To limit computation and communication overhead during prolonged isolation, the core evaluation engine of CC-OPI remains dormant by default and is awakened only by one of three physical or temporal triggers:

\begin{itemize}
    \item \textit{Topology Change Event}: triggered when relative motion among UAVs changes the local neighbor set, i.e., when a neighbor enters or leaves the communication radius $R_c$. It opportunistically activates local state synchronization and conflict resolution between previously disconnected subnets.
    \item \textit{Task Completion Event}: activated when a UAV finishes the service phase of a task or, upon arrival, physically confirms that the task has already been served. It frees the UAV's capacity, moves the task into the locally confirmed completed set, and prompts the UAV to evaluate new candidate tasks from the unallocated pool.
    \item \textit{Heartbeat Event}: triggered periodically at a low, fixed frequency by the passage of physical time. It scans the temporal state (e.g., the remaining time window) of every uncompleted task in the UAV's view, identifying tasks that approach their hard deadlines so that timely emergency intervention is possible.
\end{itemize}

By restricting algorithm activation to these explicit events, CC-OPI consumes computation and communication resources---the latter measured as discrete broadcast messages---only during meaningful state transitions, which suits operation under intermittent connectivity.

Algorithm~\ref{alg:ccopi} summarizes the per-UAV execution flow. Each wake-up runs a four-phase pipeline---event handling, consensus, removal, and inclusion---in which no phase iterates to a fixed point, a property that underpins the bounded per-invocation cost established in Section~\ref{sec:theoretical}. The cost-evaluation metrics and fault-tolerance mechanisms invoked by the pipeline are detailed in the remainder of this section.

\begin{algorithm}[!t]
\caption{CC-OPI Online Execution at UAV $u_i$.}
\label{alg:ccopi}
\small
\begin{algorithmic}[1]
\REQUIRE execution path $P_i \leftarrow [\,]$; score, assignment, and logical-time vectors; completed set $\mathcal{C}_i \leftarrow \emptyset$ with version $V_i \leftarrow 0$; \textit{Emergency Pool} $\leftarrow \emptyset$
\FOR{each physical frame of duration $\Delta t$}
    \STATE advance the physical state machine: fly toward the head of $P_i$, continue service, or idle
    \STATE collect the local events $E$ fired this frame: topology change, task completion, heartbeat, or new-task announcement (when runtime task arrival is enabled)
    \STATE broadcast the local state to neighbors within $R_c$ if $E \neq \emptyset$ or the state changed since the last broadcast
    \IF{$E \neq \emptyset$ \OR a neighbor message arrived}
        \STATE \textit{Phase 1 (event handling):}
        \STATE add each task completed by $u_i$, or observed completed upon arrival, to $\mathcal{C}_i$; advance $V_i$; evict members of $\mathcal{C}_i$ from $P_i$
        \IF{heartbeat $\in E$}
            \STATE place every uncompleted task with $0 < t_{rem}(\tau_k) < T_{emergency}$ in the \textit{Emergency Pool}
        \ENDIF
        \STATE \textit{Phase 2 (consensus):}
        \FOR{each received message}
            \STATE merge the sender's completions missing from $\mathcal{C}_i$, advance $V_i$, and evict the merged tasks from $P_i$
            \STATE resolve task ownership by the standard PI consensus rules \cite{choi2009consensus, zhao2015heuristic}, skipping tasks in $\mathcal{C}_i$
        \ENDFOR
        \STATE \textit{Phase 3 (removal):}
        \FOR{each $\tau_k \in P_i$ whose consensus owner is no longer $u_i$}
            \IF{$\tau_k \in \Omega_{lock}$}
                \STATE reclaim ownership of $\tau_k$
            \ELSIF{$RPI_{raw}(\tau_k)$ exceeds the winning consensus score}
                \STATE remove $\tau_k$ from $P_i$
            \ELSE
                \STATE retain $\tau_k$ and reassert ownership
            \ENDIF
        \ENDFOR
        \STATE \textit{Phase 4 (inclusion):}
        \WHILE{$|P_i| < C_i$}
            \STATE for each type-compatible candidate $\tau_k \notin P_i \cup \mathcal{C}_i$, compute $IPI_{dynamic}(\tau_k)$ over the deadline-feasible insertion positions, never ahead of a locked head; a flagged unattended task is exempt from the spatial penalty ($\mu(\tau_k) = 1$)
            \STATE let the gain of $\tau_k$ be $RPI_{global}(\tau_k) - IPI_{dynamic}(\tau_k)$, adding $\mathcal{A}_{bonus}$ to $RPI_{global}(\tau_k)$ first if $\tau_k$ is flagged and unattended
            \STATE insert the candidate with the largest positive gain at its best position; \textbf{break} if no gain is positive
        \ENDWHILE
        \STATE update $RPI_{eff}(\tau_k)$ for every $\tau_k \in P_i$ and write it into the score vector
    \ENDIF
\ENDFOR
\end{algorithmic}
\end{algorithm}

\subsection{Spatial-Aware Dynamic IPI Evaluation}
\label{subsec:spatial_ipi}
In the standard PI framework, the inclusion of a new task is evaluated by its IPI \cite{zhao2015heuristic}. Let $P_i \oplus_n \tau_k$ denote the insertion of task $\tau_k$ at the $n$-th position of UAV $u_i$'s execution path $P_i$. The raw IPI, $IPI_{raw}(\tau_k)$, is the minimum objective cost increment over all insertion positions that satisfy the time-window constraints:
\begin{equation}
IPI_{raw}(\tau_k) = \min_{n} \left[ \mathcal{F}_2(P_i \oplus_n \tau_k) - \mathcal{F}_2(P_i) \right],
\end{equation}
where $\mathcal{F}_2(\cdot)$ is the execution cost function (the MinSum objective evaluated at the current physical time).

Applying $IPI_{raw}(\tau_k)$ directly in fragmented networks, however, has a clear limitation: without global state information, UAVs in disconnected subnets bid greedily on the same remote, high-value tasks, producing spatial conflicts and wasted travel once the subnets reconnect and consensus is triggered.

To suppress such blind cross-regional movement, CC-OPI introduces a spatial locality penalty factor $\mu(\tau_k)$ that gives UAVs an implicit-coordination capability. The factor scales the insertion cost according to how far a candidate task lies from the UAV's initial spatial anchor $p_{i, init}$:
\begin{equation}
\mu(\tau_k) =
\begin{cases}
1, & \text{unattended emergency task,} \\[2pt]
1 + \lambda \cdot \bar{d}(\tau_k)^\gamma, & \text{otherwise,}
\end{cases}
\end{equation}
where $\bar{d}(\tau_k) = \frac{\|p_{\tau_k} - p_{i, init}\|}{D_{map}}$ is the Euclidean distance normalized by the map diagonal $D_{map}$, and the parameters $\lambda > 0$ and $\gamma \ge 1$ control the strength and growth rate of the penalty, respectively. The anchor is fixed at each UAV's deployment position rather than its current location: a fixed anchor grants every UAV a stable spatial territory, whereas an anchor that drifted with the UAV would continually re-center the low-cost region on its current location and thus fail to curb long-range roaming. The first case lifts the penalty entirely for a task that has been flagged as time-critical and left unattended; this exemption, rather than any change to the task's bid, is the mechanism that permits a UAV to cross regions for a rescue, and its precise condition is given with the emergency-rescue protocol in Section~\ref{subsubsec:emergency}. The normalized anchor distance is speed-agnostic: the same $\bar{d}(\tau_k)$ corresponds to different flight times for the two cruising speeds, a known simplification that we retain for uniformity across the heterogeneous swarm. The fixed anchor likewise presumes a dispersed deployment: if the initial positions are clustered, or unrelated to the task geography, the induced territories overlap and the penalty loses its partitioning effect---a stress case we do not evaluate and leave to future work.

The spatial-aware dynamic IPI used in CC-OPI is then:
\begin{equation}
IPI_{dynamic}(\tau_k) = IPI_{raw}(\tau_k) \times \mu(\tau_k).
\end{equation}

When a UAV is disconnected from the global network, $\mu(\tau_k)$ raises the perceived cost of distant assignments and thus discourages the UAV from claiming remote tasks. This constrains each UAV's effective operating radius and leads the swarm to self-organize into a regionalized operating pattern without explicit communication. Because each UAV concentrates on its local vicinity, overlapping claims become less likely, and fewer assignment collisions remain to be resolved once communication is restored.

\subsection{Urgency-Aware Dynamic RPI}
Complementing task inclusion, the standard PI framework uses the RPI \cite{zhao2015heuristic} to quantify the marginal contribution of an allocated task to a UAV's execution path. For a task $\tau_k$ already in $P_i$, the raw RPI, $RPI_{raw}(\tau_k)$, is the reduction in objective cost when the task is removed:
\begin{equation}
RPI_{raw}(\tau_k) = \mathcal{F}_2(P_i) - \mathcal{F}_2(P_i \setminus \tau_k).
\end{equation}

In traditional consensus protocols, conflicts that arise when subnets reconnect are resolved by comparing the $RPI_{raw}$ of competing UAVs. Once execution and allocation proceed concurrently, this purely cost-based comparison overlooks two factors and can cause assignment oscillation. First, it disregards physical commitment: a UAV in its final approach to a target, or already serving it, may lose ownership to a neighbor with a marginally lower instantaneous bid, interrupting an effectively irreversible action and wasting the committed travel. Second, it disregards temporal urgency: a task near its deadline is treated no differently from a non-urgent one, so it may be preempted by a competitor with a slightly lower bid and subsequently miss its time window.

To address both factors and stabilize assignments during reconnection, CC-OPI defines the effective RPI, $RPI_{eff}(\tau_k)$, as a piecewise function:
\begin{equation}
    RPI_{eff}(\tau_k) =
    \begin{cases}
    -\infty, & \text{if } \tau_k \in \Omega_{lock}, \\
    +\infty, & \text{if } t_{slack} \le 0, \\
    RPI_{raw}(\tau_k) - \rho_{urgency}, & \text{otherwise}.
    \end{cases}
\end{equation}
Here, $\Omega_{lock}$ is the set of tasks granted non-preemptive immunity: if a UAV is actively serving the task at the head of its path, or is approaching that head task with remaining distance below a threshold $d_{lock}$ (e.g., $50$~m), the task is hard-locked to preserve the continuity of the ongoing physical motion. An effective RPI of $-\infty$ is the strongest possible retention bid, so a locked task cannot be preempted in consensus; this protects the irreversible final approach and the in-progress service from disruption. Conversely, if a local evaluation determines that a task will inevitably miss its time window---that is, its remaining time slack $t_{slack} = D_k - A_{i,k}$ satisfies $t_{slack} \le 0$---the UAV broadcasts an RPI of $+\infty$ to signal that it is abandoning ownership.

For any task that is neither locked nor certain to miss its deadline, the effective RPI adjusts the raw value by an urgency penalty $\rho_{urgency}$ that prioritizes tasks close to their deadline. Under a cost-minimization objective such as MinSum, a lower RPI represents a stronger retention bid, so subtracting $\rho_{urgency}$ makes a more urgent task harder for neighbors to preempt. Let $t_{safe}$ be a safety threshold on the remaining slack. The penalty is defined as:
\begin{equation}
    \rho_{urgency} =
    \begin{cases}
    \beta \times \left( \frac{t_{safe} - t_{slack}}{t_{slack}} \right)^{\kappa}, & \text{if } 0 < t_{slack} < t_{safe}, \\
    0, & \text{if } t_{slack} \ge t_{safe},
    \end{cases}
\end{equation}
where $\beta > 0$ scales the magnitude of the penalty and $\kappa \ge 1$ controls how sharply it grows as $t_{slack} \to 0$. When $t_{slack} \ge t_{safe}$, the task is not yet time-critical and incurs no penalty; once $t_{slack}$ drops below $t_{safe}$, $\rho_{urgency}$ rises steeply, progressively shielding the near-deadline task from reassignment.

\subsection{Decentralized State Synchronization and Emergency Rescue}
In fragmented networks, stale information is a primary cause of coordination failure. Disconnected communication subgraphs may persist for long periods, producing large discrepancies in how UAVs perceive global task progress. To maintain eventual consistency without a centralized database, CC-OPI combines a version-based completion-synchronization mechanism with a global-time-driven emergency-rescue protocol. The fault tolerance provided at this layer is state reconciliation---recovery from inconsistent ownership and completion views---rather than tolerance of vehicle hardware failures, which we do not model.

\subsubsection{Version-Based Completion Synchronization}

To prevent a UAV from pursuing a task already completed by peers in other partitions---a situation that can induce coordination deadlock---each UAV $u_i$ maintains a locally confirmed completed-task set $\mathcal{C}_i$ and a monotonically increasing version number $V_i$. A task enters $\mathcal{C}_i$ locally when $u_i$ completes it, or when $u_i$ arrives at the task location and physically observes that it has been served. When a \textit{Topology Change Event} brings UAVs $u_i$ and $u_j$ into communication range, each merges the completions it has not yet recorded, $\mathcal{C}_i \leftarrow \mathcal{C}_i \cup \mathcal{C}_j$, and advances its version counter by the number of newly recorded tasks. The merge is driven by set content rather than by comparing version numbers across UAVs---local counters are not mutually ordered, so a version-gated merge could stall between UAVs holding equal versions but different sets; the version number instead acts as a local change counter that triggers the rebroadcast of the updated state. Each broadcast carries the UAV's full completed set---the scheme is thus a versioned full-state gossip with a monotone merge, not a delta transmission---and completion information propagates through opportunistic encounters in the manner of an epidemic protocol \cite{demers1987epidemic, vahdat2000epidemic}. Any task in $\mathcal{C}_i$ is removed from the UAV's execution path and excluded from the subsequent inclusion phase, which purges obsolete assignments and resolves deadlocks caused by stale information.

\subsubsection{Emergency Pool and Attractiveness Bonus}
\label{subsubsec:emergency}
Beyond consistency, the system must mitigate the risk of task omission caused by spatial regionalization. Driven by the \textit{Heartbeat Event}, each UAV periodically scans the temporal state of every uncompleted task in its local view using the objective remaining time $t_{rem}(\tau_k) = D_k - t$, where $t$ is the current physical time. Unlike the UAV-specific slack $t_{slack} = D_k - A_{i,k}$, this quantity is identical across all UAVs, so they agree on which tasks are at risk. If $0 < t_{rem}(\tau_k) < T_{emergency}$, the task is flagged and placed in a local \textit{Emergency Pool}. This agreement presupposes a globally known task catalog and a shared mission clock---assumptions stated explicitly in Section~\ref{sec:simulation}; since both the heartbeat period and the deadline scale exceed realistic time-synchronization errors by orders of magnitude, a bounded clock skew does not alter which tasks are flagged.

A task in the \textit{Emergency Pool} that is \emph{unattended}---either unassigned or abandoned by its previous owner (signaled by a broadcast $RPI_{eff} = +\infty$)---triggers a two-part response that suspends the spatial regionalization for that task alone.

First, the spatial locality penalty is lifted. As defined in Section~\ref{subsec:spatial_ipi}, the candidate's penalty factor is set to $\mu(\tau_k) = 1$, so its dynamic insertion cost reduces to $IPI_{dynamic}(\tau_k) = IPI_{raw}(\tau_k)$ and is evaluated on its true merit rather than inflated by distance. This exemption is the primary mechanism that allows a UAV with spare capacity to leave its region and perform a cross-regional rescue: the geographic barrier that $\mu(\tau_k)$ erects in the normal case is precisely what would otherwise cause a distant emergency task to be neglected.

Second, the task's priority is raised during inclusion. The algorithm adds a large positive constant $\mathcal{A}_{bonus}$ to its value in the global consensus vector, $RPI_{global}(\tau_k)$. Recall that the inclusion phase adopts candidates that maximize the gain $RPI_{global}(\tau_k) - IPI_{dynamic}(\tau_k)$, so a higher $RPI_{global}$ makes a task more attractive to adopt---the complement of the consensus phase, where a lower RPI denotes a stronger claim to retain a task already held. The bonus is applied as
\begin{equation}
RPI_{global}(\tau_k) \leftarrow RPI_{global}(\tau_k) + \mathcal{A}_{bonus},
\end{equation}
which promotes the flagged task in the inclusion ranking when a UAV's capacity is contested. It acts as a soft priority margin whose exact magnitude is immaterial provided it is large enough to favor the flagged task, a property examined in Section~\ref{subsec:hyperparam}. Together, the exemption and the bonus lead UAVs with spare capacity to rescue tasks under severe time pressure, improving the overall task completion rate.

\section{Theoretical Analysis}
\label{sec:theoretical}

\subsection{Finite-Time Termination}
\label{subsec:convergence}

Classical convergence guarantees for distributed allocation methods such as CBBA and PI \cite{choi2009consensus, zhao2015heuristic} rest on two premises: the task scores are static constants, and the communication network remains sufficiently connected throughout negotiation, mirroring the connectivity conditions under which consensus dynamics over switching topologies are known to converge \cite{olfati2004consensus}. CC-OPI satisfies neither. Its effective scores vary with the UAVs' physical positions and the elapsed time, and its topology is intermittently connected and may fragment into isolated components. We therefore establish \emph{finite-time termination together with deadlock- and livelock-freedom}, rather than convergence to a global optimum, which is unattainable under persistent partitioning. Concretely, three layered facts are proved: each solver invocation halts after a bounded amount of work (Lemma~\ref{lem:bounded}); the mission performs finitely many operations and terminates within a bounded horizon (Theorem~\ref{thm:termination}); and, within that horizon, no stale-completion deadlock and no unbounded reassignment sequence occur (Corollaries~\ref{cor:deadlock} and~\ref{cor:livelock}). We do not claim that the assignment reaches a network-wide fixed point before the task deadlines; under persistent partitioning, no algorithm can guarantee this. The analysis requires no assumption on the marginal-gain structure of the score functions.

\begin{assumption}
\label{asm:standing}
The following standing conditions hold throughout:
\begin{enumerate}
    \item[(A1)] The number of UAVs $N$ and the number of tasks $M$ are finite.
    \item[(A2)] Each cruising speed is bounded below, $v_i \ge v_{min} > 0$; each service time $s_k$ and each capacity $C_i$ is finite.
    \item[(A3)] Each deadline $D_k$ is finite.
    \item[(A4)] The lock distance satisfies $d_{lock} > 0$, and physical time advances in fixed steps $\Delta t > 0$.
    \item[(A5)] Ties in the consensus rules are broken by a fixed total order on the UAV indices.
\end{enumerate}
\end{assumption}

\begin{definition}
\label{def:terms}
A task is \emph{live} at time $t$ if it is neither completed nor expired, where a task is expired once $t > D_k$. The system is in a \emph{terminal state} if every task is completed or expired and every UAV is idle with an empty path. The potential $\Phi(t)$ denotes the number of live tasks, so that $\Phi(t) \in \{0, 1, \dots, M\}$. A \emph{coordination deadlock} is the indefinite holding or pursuit of an already-completed task, sustained by stale completion information; a \emph{reassignment livelock} is an unbounded sequence of ownership transfers of the same task within the mission.
\end{definition}

We recall from Section~\ref{sec:cc-opi} that a task belongs to the lock set $\Omega_{lock}$ when it is the head of its owner's path and the owner is serving it or moving toward it with remaining distance below $d_{lock}$, in which case its effective removal score is $RPI_{eff} = -\infty$.

\begin{lemma}[Bounded computation per invocation]
\label{lem:bounded}
Each wake-up of the CC-OPI online solver processes the pending events and then executes the consensus, removal, and inclusion phases of Section~\ref{sec:cc-opi} exactly once, halting after a number of operations that is finite and polynomial in $N$, $M$, and $C_i$. The explicit bound is given in Section~\ref{subsec:complexity}.
\end{lemma}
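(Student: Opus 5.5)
We prove the lemma by walking once through the four phases of Algorithm~\ref{alg:ccopi} and bounding each phase by counting loop iterations and the cost of each iteration body. Two structural facts do most of the work. First, by inspection of the pseudocode, no phase returns to an earlier phase or iterates to a fixed point: Phases~1--3 are \textbf{for} loops over finite index sets, and Phase~4 is the only \textbf{while} loop. Second, every data structure a UAV holds has size bounded by (A1)--(A2). The path satisfies $|P_i|\le C_i$, and $\mathcal{C}_i$ and the Emergency Pool are subsets of $\mathcal{T}$, so each has at most $M$ elements. The score, assignment, and logical-time vectors have length $M$ or $N$. A UAV has at most $N-1$ neighbors, so at most $N-1$ messages arrive in one frame, and each message carries a full state of size $O(M+N)$.

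Step 1 bounds Phase~1. Adding local completions to $\mathcal{C}_i$ and evicting them from $P_i$ costs $O(M C_i)$. The heartbeat scan tests the condition $0<t_{rem}(\tau_k)<T_{emergency}$ for each of at most $M$ tasks, costing $O(M)$. Step 2 bounds Phase~2. For each of at most $N-1$ messages, the set-union merge costs $O(M)$ with a membership bitmap, and eviction from $P_i$ costs $O(C_i)$. The PI consensus rules make one comparison per task entry, with ties resolved deterministically by (A5), so each message costs $O(M)$ and the phase costs $O(NM)$. Step 3 bounds Phase~3. The loop visits each of at most $C_i$ path entries. Deciding membership in $\Omega_{lock}$ is $O(1)$, since only the head can be locked. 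Computing $RPI_{raw}$ means recomputing the arrival recursion on $P_i\setminus\tau_k$, which costs $O(C_i)$. The phase is therefore $O(C_i^2)$. Evaluating $RPI_{eff}$ involves only closed-form expressions in $t_{slack}$, and the $\pm\infty$ cases are just flag values, so this adds $O(1)$ per task.

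Step 4 bounds Phase~4, which I expect to be the only nontrivial point: the \textbf{while} loop must be shown to halt. Each pass either inserts exactly one candidate, which strictly increases $|P_i|$, or executes \textbf{break}. Nothing inside Phase~4 removes a task from $P_i$, so there are at most $C_i - |P_i| \le C_i$ passes. Within one pass there are at most $M$ candidates. Each candidate is tried at up to $|P_i|+1\le C_i+1$ insertion positions, and each position needs an $O(C_i)$ recomputation of arrival times to check deadline feasibility and evaluate $\mathcal{F}_2$. Computing $\mu(\tau_k)$, the emergency exemption, and the bonus $\mathcal{A}_{bonus}$ is $O(1)$. Phase~4 therefore costs $O(C_i\cdot M\cdot C_i^2)=O(MC_i^3)$. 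The final write-back of $RPI_{eff}$ costs $O(C_i^2)$.

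Summing the phases gives a per-invocation bound of $O(NM + MC_i^3)$, or $O(NM + MC_i^3 + C_i^2)$ if lower-order terms are kept. This bound is finite and polynomial in $N$, $M$, and $C_i$, and the explicit form is the one referenced for Section~\ref{subsec:complexity}. Note that the argument uses no property of the score values themselves beyond their finite computability. The values $\pm\infty$ enter only as comparisons, so the argument is consistent with the remark that no marginal-gain structure is assumed.
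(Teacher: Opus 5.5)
Your proposal is correct and takes the same route as the paper's proof: one pass through the four phases, none iterated to a fixed point, with the inclusion phase bounded by at most $C_i$ admissions, each scanning at most $M$ candidates over at most $C_i$ positions. You go further than the paper in two ways: you justify explicitly why the Phase~4 \textbf{while} loop halts (each pass either inserts a task or breaks, and nothing in that phase removes a task), and you fold in the per-phase tally $O(NM + MC_i^3)$ that the paper defers to Section~\ref{subsec:complexity}.
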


\begin{IEEEproof}
No phase is iterated to a fixed point. Event handling performs a single pass over the task set; the consensus phase performs a single pass over the received neighbor messages, scanning the task set once per message; the removal phase performs a single pass over the current path, whose length is at most $C_i$; and the inclusion phase admits at most $C_i$ tasks, each admission requiring a scan of at most $M$ candidates over at most $C_i$ insertion positions. Each phase thus processes a finite set a bounded number of times, so the invocation performs finitely many operations and returns.
\end{IEEEproof}

\begin{lemma}[Monotone confirmed completion]
\label{lem:monotone}
For each UAV $u_i$, the locally confirmed completed set $\mathcal{C}_i(t)$ is non-decreasing in $t$ and satisfies $|\mathcal{C}_i(t)| \le M$.
\end{lemma}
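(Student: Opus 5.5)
The plan is to prove the lemma by inspecting every operation in Algorithm~\ref{alg:ccopi} that writes to $\mathcal{C}_i$, and checking that each one is a set union.

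\textbf{Step 1: enumerate the writes.} $\mathcal{C}_i$ is initialized to $\emptyset$. It is modified in exactly two places.
\begin{enumerate}
\item[(i)] Phase~1 (event handling) adds a task that $u_i$ completes, or that $u_i$ observes already served upon arrival.
\item[(ii)] Phase~2 (consensus) performs the merge $\mathcal{C}_i \leftarrow \mathcal{C}_i \cup \mathcal{C}_j$ with a received neighbor set $\mathcal{C}_j$.
\end{enumerate}
No other step removes elements from $\mathcal{C}_i$. Phases~3--4 and the physical state machine only read $\mathcal{C}_i$; they use it to evict tasks from $P_i$ and to exclude candidates from inclusion. The version counter $V_i$ is a separate variable, so its updates do not touch the set.

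\textbf{Step 2: monotonicity.} Each write has the form $\mathcal{C}_i \leftarrow \mathcal{C}_i \cup X$ for some set $X$. Set union is inflationary, so every individual update satisfies $\mathcal{C}_i^{\mathrm{old}} \subseteq \mathcal{C}_i^{\mathrm{new}}$. Time advances in discrete frames of length $\Delta t$ by (A4), and each frame executes finitely many operations by Lemma~\ref{lem:bounded}. For $t \le t'$, the change from $\mathcal{C}_i(t)$ to $\mathcal{C}_i(t')$ is therefore a finite composition of these updates. Inclusion is transitive, so $\mathcal{C}_i(t) \subseteq \mathcal{C}_i(t')$.

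\textbf{Step 3: the cardinality bound.} I would show by induction over all updates, network-wide and ordered by frame and processing order, that $\mathcal{C}_j(t) \subseteq \mathcal{T}$ for every UAV $j$.
\begin{enumerate}
\item[(a)] Base case: every set starts as $\emptyset \subseteq \mathcal{T}$.
\item[(b)] Phase~1 update: it inserts an element of $\mathcal{T}$, because a UAV only completes or observes tasks from the task catalog.
\item[(c)] Phase~2 merge: it unions $\mathcal{C}_i$ with a broadcast copy of some $\mathcal{C}_j$. That copy was taken at an earlier point in the ordering, so it is a subset of $\mathcal{T}$ by the induction hypothesis.
\end{enumerate}
Hence $|\mathcal{C}_i(t)| \le |\mathcal{T}| = M$, which is finite by (A1).

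\textbf{Main obstacle.} The only delicate point is Step~1: confirming that no hidden operation ever deletes from $\mathcal{C}_i$. This matters in particular because the merge is content-driven rather than version-gated, so a stale message could conceivably be applied as an overwrite. The answer is that the protocol merges by union and never by replacement, so a stale $\mathcal{C}_j$ can only add elements, possibly redundant ones, and can never shrink $\mathcal{C}_i$. I would state this explicitly, citing the description of the monotone merge in Section~\ref{sec:cc-opi}.
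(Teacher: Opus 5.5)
Your proposal is correct and follows the paper's own argument. The paper likewise observes that $\mathcal{C}_i$ is modified only by union, through a local completion or arrival-time observation or through the content-driven merge with a neighbor's set. It concludes that no element is ever removed, and obtains the bound from the finiteness of the task set under (A1). Your induction showing $\mathcal{C}_j \subseteq \mathcal{T}$ for every UAV spells out the step the paper leaves implicit when it simply cites (A1). Your appeal to the bounded-computation lemma in Step~2 is harmless but unnecessary, since an inflationary update never removes an element however many updates occur.
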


\begin{IEEEproof}
The set $\mathcal{C}_i$ is modified only by union: with a task completed by $u_i$ or observed completed upon arrival, or with the previously unrecorded elements of a neighbor's completed set during a topology-change synchronization. No element is ever removed, so $|\mathcal{C}_i|$ is non-decreasing; the bound $|\mathcal{C}_i| \le M$ follows from (A1).
\end{IEEEproof}

\begin{corollary}[Deadlock-freedom within a component]
\label{cor:deadlock}
Within any connected component whose members remain connected until the completed-set synchronization finishes, no completed task is re-included by its members; hence no coordination deadlock arises from stale completion information inside such a component.
\end{corollary}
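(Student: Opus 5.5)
The argument rests on Lemma~\ref{lem:monotone} together with the fact that the completed-set merge is a union. Fix a connected component $K$ whose members stay connected until the completed-set synchronization finishes. Let $\tau_k$ be a task that has been completed and that appears in $\mathcal{C}_j$ for at least one $u_j \in K$. We will show that this completion spreads to every member of $K$, and that once it reaches a member $u_i$ it permanently blocks $u_i$ from holding or pursuing $\tau_k$.

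\textbf{Step 1 (propagation).} Every broadcast carries the sender's full completed set, and a broadcast is issued whenever $V_i$ changes. A member receiving $\mathcal{C}_j$ performs $\mathcal{C}_i \leftarrow \mathcal{C}_i \cup \mathcal{C}_j$. We argue by induction on hop distance in the (fixed, during synchronization) spanning structure of $K$. If $\tau_k \in \mathcal{C}_j$ and $u_i$ is a neighbor of $u_j$, then $u_j$'s next broadcast inserts $\tau_k$ into $\mathcal{C}_i$. That broadcast is guaranteed to happen: either a topology-change event fires, or $u_j$'s state changed since its last broadcast. The insertion advances $V_i$, which triggers $u_i$'s own rebroadcast. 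Since $K$ is finite by (A1), after at most $|K|-1$ hops every member holds $\tau_k$.

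This step is where the main obstacle lies. We must make precise what ``synchronization finishes'' means. We must also check that a member whose version did not change still eventually transmits to a newly adjacent neighbor. For this I would rely on the rebroadcast rule in Algorithm~\ref{alg:ccopi}, which fires on any event including a topology change. I would also rely on the content-based (not version-gated) merge, so that equal version numbers never stall the exchange.

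\textbf{Step 2 (persistence).} By Lemma~\ref{lem:monotone}, once $\tau_k \in \mathcal{C}_i$ it is never removed.

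\textbf{Step 3 (exclusion).} Whenever $\mathcal{C}_i$ grows, Phase~1 and Phase~2 evict its members from $P_i$. Phase~4 restricts candidates to $\tau_k \notin P_i \cup \mathcal{C}_i$, and the consensus rules skip tasks in $\mathcal{C}_i$. So from the merge time onward, $\tau_k$ is neither in $P_i$ nor re-included. There is one case to handle separately: $\tau_k$ may be a locked head task, with $RPI_{eff} = -\infty$ and reclaimed in Phase~3. Because the eviction by $\mathcal{C}_i$ occurs in Phases~1--2, before Phase~3 runs, the lock cannot resurrect the task. Even if the UAV had arrived physically before the information did, the task-completion event on arrival adds $\tau_k$ to $\mathcal{C}_i$ as well.

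\textbf{Step 4 (conclusion).} Each member therefore holds or pursues a completed task for only a finite interval, namely until the finite propagation of Step~1 completes. This excludes the indefinite holding that defines a coordination deadlock in Definition~\ref{def:terms}.
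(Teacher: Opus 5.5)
Your proposal is correct and follows essentially the same route as the paper's proof: the completion fact propagates through the component in finitely many synchronization rounds, it is never dropped (Lemma~\ref{lem:monotone}), and each holder evicts the task and excludes it from every later inclusion phase. Your broadcast-trigger analysis and lock-ordering check add detail the paper leaves implicit, and the paper adds a remark on components that split before synchronization finishes, which lies outside the corollary's hypothesis; your fixed-spanning-tree induction assumes slightly more than \emph{remain connected}, but topology-change and heartbeat broadcasts still let the completion fact reach every member in finitely many frames under a time-varying connected edge set, so nothing breaks.
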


\begin{IEEEproof}[Proof sketch]
The completion fact propagates monotonically (Lemma~\ref{lem:monotone}) to every member that remains in the component within finitely many synchronization rounds---and, after a split, along any later temporal contact path (the gossip semantics of Section~\ref{sec:cc-opi})---whereupon each holder evicts the task and permanently excludes it from the inclusion phase. The complete proof is given in the supplementary material.
\end{IEEEproof}

\begin{lemma}[Eventual commitment]
\label{lem:commit}
Let $u_i$ move toward its primary target $\tau$, the head of its path. Once the remaining distance falls below $d_{lock}$, $\tau \in \Omega_{lock}$ and $RPI_{eff}(\tau) = -\infty$; thereafter $\tau$ is neither preempted through consensus nor displaced from the head of the path. The task $\tau$ then leaves the path of $u_i$ within $d_{lock}/v_{min} + s_\tau$, either served by $u_i$ or, if another UAV is already serving it, dropped by $u_i$ upon arrival. Moreover, if $\tau$ is deadline-feasible at the instant it is locked, this resolution is an on-time completion by exactly one UAV.
\end{lemma}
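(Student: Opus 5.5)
The argument traces the algorithm's phases directly: first the lock enters and persists, then a timing bound follows, then uniqueness. \emph{Entry into the lock.} By the definition recalled before the lemma, $\Omega_{lock}$ contains the head of the owner's path once the owner is moving toward it with remaining distance below $d_{lock}$, or is serving it. So the moment the distance drops below $d_{lock}$, we have $\tau\in\Omega_{lock}$ and, by the first case of the piecewise definition, $RPI_{eff}(\tau)=-\infty$.

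\emph{Persistence of the lock.} I will show that the lock is invariant until $\tau$ leaves $P_i$. This is done by walking through the four phases of Algorithm~\ref{alg:ccopi} at every wake-up. Each wake-up halts (Lemma~\ref{lem:bounded}), so the induction over frames is well defined.
\begin{itemize}
\item In the consensus phase, $-\infty$ is the strongest possible retention bid. No finite neighbor score can beat it. If two UAVs both hold $-\infty$, the tie is broken by (A5), but the removal phase has the $\tau_k\in\Omega_{lock}$ branch that reclaims ownership regardless. Hence $\tau$ is never removed through consensus.
\item The inclusion phase only inserts "never ahead of a locked head", so $\tau$ stays at the head.
\item Event handling removes $\tau$ only if $\tau\in\mathcal{C}_i$, which is exactly the exit case.
\item The physical state machine keeps flying toward the head, so the distance is nonincreasing and the lock condition continues to hold until arrival; after arrival the UAV is serving, which is also a lock condition.
\end{itemize}

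\emph{Time bound.} The remaining distance is below $d_{lock}$ and the speed is at least $v_{min}$ by (A2). Arrival therefore occurs within $d_{lock}/v_{min}$, with at most one extra $\Delta t$ of frame granularity by (A4). I will either absorb this into the bound or note that the distance at locking is at most $d_{lock}$ minus one step, so the bound holds as stated.

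At arrival there are two cases, by the single-service rule:
\begin{itemize}
\item If the task is already completed or under service by a peer, $u_i$ abandons it. It enters $\mathcal{C}_i$ via physical observation, a Task Completion Event, and is evicted immediately.
\item Otherwise $u_i$ serves it for $s_\tau$ and then records completion.
\end{itemize}
Either way, $\tau$ leaves $P_i$ within $d_{lock}/v_{min}+s_\tau$.

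\emph{On-time uniqueness.} Feasibility at locking means $A_{i,\tau}\le D_\tau$ as computed from the current position. Since $u_i$ flies straight to the head without detours, because the head is not displaced, its actual arrival coincides with the estimate. It therefore arrives on time, and service may finish late under the completion definition. Uniqueness follows from the single-service rule: only one UAV can be serving or have served the task, and any later arrival drops it.

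The subtle case is when another UAV $u_j$ is serving the task when $u_i$ arrives. Then $u_j$ completed it on time, since it arrived before $u_i$ and $u_i$ arrived by $D_\tau$. So the resolution is still a single on-time completion.

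The main obstacle I expect is the tie and simultaneity case: two UAVs locking the same task in disjoint components, both holding $-\infty$. I will handle it without consensus, purely through physical arrival order and the single-service rule, rather than relying on bid comparison.
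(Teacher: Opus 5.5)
Your proposal is correct and follows essentially the same route as the paper's proof: the $-\infty$ retention bid, the removal-phase reclaim (covering the (A5) tie case) and the rule against inserting ahead of a locked head keep $\tau$ at the head, while the $d_{lock}/v_{min}$ travel bound plus non-interruptible service give the time bound, invariance of $A_{i,k}=t+dist_{rem}/v_i$ under direct flight gives on-time arrival, and the single-service rule gives uniqueness---and your argument that a peer already serving $\tau$ must have arrived before $u_i$, hence on time, is a detail the paper leaves implicit. One small correction: drop the fallback that the distance at locking is at most $d_{lock}$ minus one step, since at the first frame below $d_{lock}$ it is actually at least $d_{lock}-v_i\Delta t$; either argue in continuous time as the paper does, or carry an extra $\Delta t$ in the bound.
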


\begin{IEEEproof}[Proof sketch]
A locked task carries the strongest possible retention bid ($RPI_{eff} = -\infty$), is reclaimed rather than released by the removal phase, and cannot be displaced from the head of the path by insertion, so it remains the head until physically resolved; resolution within $d_{lock}/v_{min} + s_\tau$ follows from the bounded remaining distance and the non-interruptible service. The complete proof, including the treatment of concurrent locking and of heads that are already deadline-infeasible when locked, is given in the supplementary material.
\end{IEEEproof}

\begin{theorem}[Finite-time termination]
\label{thm:termination}
Under Assumption~\ref{asm:standing}, CC-OPI reaches a terminal state within finite physical time and performs finitely many operations. Recalling the map diagonal $D_{map}$ from Section~\ref{sec:cc-opi}, the termination time satisfies
\begin{equation}
\label{eq:tstop}
    T_{stop} \le T_{max} := \max_k D_k + \Bigl(\max_i C_i\Bigr)\!\left(\frac{D_{map}}{v_{min}} + \max_k s_k\right).
\end{equation}
\end{theorem}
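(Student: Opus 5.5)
The plan is to split the mission at the instant $T_D := \max_k D_k$ and argue in two stages: first that no new work can be adopted after $T_D$, then that the work already held at $T_D$ drains within the stated additive term. Finiteness of operations then follows from finitely many frames times the per-frame bound of Lemma~\ref{lem:bounded}.

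\textbf{Stage 1: no adoption after $T_D$.} For $t > T_D$ every task is expired, so the potential satisfies $\Phi(t) = 0$. At this point I would examine the inclusion phase. A candidate $\tau_k$ is admitted only at a deadline-feasible insertion position, which requires $A_{i,k} \le D_k$. Any arrival estimate is at least the current time, so $A_{i,k} \ge t > D_k$ and no position is feasible. After $T_D$ the path of every UAV is therefore non-increasing as a set. The only operations left are evictions through completion merges (Lemma~\ref{lem:monotone}), removals in Phase~3, and physical resolution of the head.

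\textbf{Stage 2: draining the residual paths.} At $T_D$ each path holds at most $\max_i C_i$ tasks. I would show that every head leaves the path within $D_{map}/v_{min} + \max_k s_k$. Since the head is not re-inserted, the UAV flies directly toward it, and every task location lies within the map, so the remaining distance is at most $D_{map}$. After at most $(D_{map} - d_{lock})/v_{min}$ the distance falls below $d_{lock}$ and the head is locked. Lemma~\ref{lem:commit} then resolves it within $d_{lock}/v_{min} + s_\tau$. The two pieces together give the per-task bound, and at most $\max_i C_i$ heads are consumed sequentially, which yields \eqref{eq:tstop}. Heads removed earlier by the $+\infty$ abandonment rule, or by completion merges, only shorten this drain. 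After the last head, every path is empty and every UAV idles, so the state is terminal.

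\textbf{Main obstacle.} The delicate point is Stage~2's use of Lemma~\ref{lem:commit}, because expired heads may be deadline-infeasible when locked. Here I would rely only on the resolution-time part of that lemma, not on the on-time-completion clause, and check that a UAV arriving at an expired task still drops it.

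A second concern is that a head could be displaced before locking by Phase~3 consensus loss. Removal only shortens the path, so this makes the bound looser rather than invalid, provided no reinsertion occurs, and Stage~1 guarantees none does. I would also verify that the $+\infty$ abandonment rule, which fires for $t_{slack} \le 0$, cannot cause a UAV to keep a task indefinitely without flying to it.

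\textbf{Counting operations.} Finally, the horizon $[0, T_{max}]$ contains at most $\lceil T_{max}/\Delta t\rceil$ frames by (A4). Each frame triggers at most $N$ invocations, and each invocation is bounded by Lemma~\ref{lem:bounded}. Messages per frame are at most $N^2$. The total operation count is therefore finite.
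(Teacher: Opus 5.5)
Your proposal is correct and follows essentially the same route as the paper's proof: split at $\max_k D_k$, note that no insertion is deadline-feasible afterwards so paths only shrink, drain at most $\max_i C_i$ heads each within $D_{map}/v_{min}+\max_k s_k$ using Lemma~\ref{lem:commit}, and bound the operation count by the number of frames times the per-invocation bound of Lemma~\ref{lem:bounded}. Your pre-lock/locked split of each leg is a more explicit version of the paper's travel-plus-service step; one small correction is that an unserved expired head is served late rather than dropped on arrival, as the paper's proof of Lemma~\ref{lem:commit} states, and your $s_\tau$ term already covers this case.
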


\begin{IEEEproof}[Proof sketch]
\emph{(i)} Once $t > \max_k D_k$ no task is live, so the inclusion phase admits nothing and each UAV merely drains its path of length at most $\max_i C_i$, each head resolving within $D_{map}/v_{min} + \max_k s_k$ (Lemma~\ref{lem:commit}), which yields~\eqref{eq:tstop}. \emph{(ii)} By (A4) the number of frames is at most $T_{max}/\Delta t$, and each wake-up is bounded by Lemma~\ref{lem:bounded}. \emph{(iii)} Resolution is permanent, so $\Phi(t)$ is non-increasing; Corollary~\ref{cor:deadlock} precludes stale-completion deadlock, and Lemma~\ref{lem:commit} ensures that committed progress is resolved rather than reversed. The complete proof is given in the supplementary material.
\end{IEEEproof}

The horizon bound~\eqref{eq:tstop} follows from the finite deadlines together with the bounded time to drain each path; the substance of Theorem~\ref{thm:termination} lies in parts (ii) and (iii), which guarantee that within this horizon the algorithm performs only finitely many operations and makes monotone progress toward termination, rather than expending the horizon on unbounded negotiation or reassignment oscillation.

\begin{corollary}[Deadlock- and livelock-freedom]
\label{cor:livelock}
Within the execution horizon of Theorem~\ref{thm:termination}, CC-OPI admits neither coordination deadlock arising from stale completion information nor an unbounded reassignment sequence.
\end{corollary}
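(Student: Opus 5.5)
The corollary has two claims, and I would prove them separately, both by reduction to Theorem~\ref{thm:termination}. The key leverage is that the theorem bounds the mission horizon by $T_{max}$ and the number of frames by $T_{max}/\Delta t$.

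For deadlock-freedom, I would argue from Definition~\ref{def:terms}. A coordination deadlock is the \emph{indefinite} holding or pursuit of an already-completed task. Suppose some UAV $u_i$ holds or pursues a completed task $\tau$ from some time onward. There are two possibilities.
\begin{itemize}
\item If $\tau$ ever reaches the head of $P_i$ and $u_i$ approaches it, then $u_i$ physically observes upon arrival that $\tau$ was served. Lemma~\ref{lem:commit} bounds this within $D_{map}/v_{min}+s_\tau$ of locking. The task then enters $\mathcal{C}_i$, and by Lemma~\ref{lem:monotone} it stays there forever. It is therefore evicted from $P_i$ and excluded from every later inclusion phase.
\item If $\tau$ never becomes the head, then it is drained along with the rest of the path after $\max_k D_k$, by part (i) of the theorem's proof.
\end{itemize}
Either way, the holding ends by $T_{max}$, so it is not indefinite. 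Inside a component that stays connected, Corollary~\ref{cor:deadlock} gives the stronger statement that $\tau$ is never re-included at all. Across partitions, the physical-observation route above suffices, because the corollary only claims the absence of deadlock, not the absence of redundant travel.

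For livelock-freedom, I would count ownership transfers directly. Every transfer of a task's ownership happens inside some wake-up's consensus or removal phase. There are at most $N$ UAVs and at most $T_{max}/\Delta t$ frames, each with at most one wake-up per UAV. Lemma~\ref{lem:bounded} bounds the operations per wake-up, and hence the transfers per wake-up, by a polynomial in $N$, $M$ and $\max_i C_i$. The total number of transfers of any single task within the horizon is therefore at most $N\,(T_{max}/\Delta t)$ times that polynomial, which is finite.

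I would add a sharper structural remark showing the sequence also stops well before the horizon. Once a task is locked, Lemma~\ref{lem:commit} shows it is never preempted again. Once it is completed or expired, it is no longer live, so $\Phi$ decreases and the task is excluded from inclusion. Hence transfers of $\tau$ can occur only during the live, unlocked window of $\tau$, which ends by $D_\tau$ plus the lock resolution time.

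The main obstacle is the deadlock half under persistent partitioning. There the gossip route of Corollary~\ref{cor:deadlock} may never deliver the completion fact, so I must rely on physical observation upon arrival plus the draining argument to show the pursuit terminates. The delicate point is arguing that a stale holder can neither re-insert $\tau$ after eviction nor keep it forever without it becoming the head. I would handle this with the permanence of $\mathcal{C}_i$ (Lemma~\ref{lem:monotone}) and the exclusion of expired tasks after $D_\tau$.
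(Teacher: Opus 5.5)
Your livelock half is the paper's argument: at most $N\lceil T_{max}/\Delta t\rceil$ wake-ups from parts (i)--(ii) of Theorem~\ref{thm:termination}, a bounded change per wake-up from Lemma~\ref{lem:bounded}, and irreversibility of locked commitments from Lemma~\ref{lem:commit}.

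Your deadlock half takes a genuinely different route. The paper cites Corollary~\ref{cor:deadlock}, which covers gossip of $\mathcal{C}_i$ inside a component that stays connected. It then defers cross-component staleness to Remark~\ref{rem:scope}, i.e.\ to resolution upon merging. You instead combine four ingredients: the physical-observation entry rule for $\mathcal{C}_i$, its permanence (Lemma~\ref{lem:monotone}), the deadline-infeasibility of expired tasks, and the draining argument of part (i) of the theorem. Together these show that any stale holding ends by $T_{max}$ and never recurs at that UAV.

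What your route buys:
\begin{itemize}
\item It covers UAVs that never regain contact, a case the paper's appeal to merging does not address.
\item It is not circular, because part (i) of the theorem does not use Corollary~\ref{cor:deadlock}; only part (iii) does.
\end{itemize}
What it costs is strength. You show that stale pursuit terminates, not that it never starts. The stronger ``never re-included'' guarantee still comes only from Corollary~\ref{cor:deadlock} inside a synchronized component, as you yourself note.

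Two small repairs are needed.
\begin{itemize}
\item Your case split omits a task that becomes the head and is then displaced before reaching $d_{lock}$. The draining argument alone already covers every case, so the split is unnecessary.
\item Lemma~\ref{lem:commit} gives resolution within $d_{lock}/v_{min}+s_\tau$ after locking, not $D_{map}/v_{min}+s_\tau$.
\end{itemize}

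Finally, drop or qualify your ``sharper structural remark.'' A completed task is excluded from inclusion only at UAVs whose $\mathcal{C}_i$ already contains it. Stale UAVs in another partition can therefore still include it, and trade ownership of it, after it has been completed. So transfers are not confined to the live, unlocked window. Your counting argument does not rely on this remark, so removing it costs nothing.
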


\begin{IEEEproof}[Proof sketch]
Deadlock-freedom follows from Corollary~\ref{cor:deadlock} together with Remark~\ref{rem:scope}; livelock-freedom follows because the wake-up budget over $[0, T_{max}]$ is finite (Theorem~\ref{thm:termination}), each wake-up modifies a path by a bounded amount (Lemma~\ref{lem:bounded}), and a locked commitment is irreversible (Lemma~\ref{lem:commit}). The complete proof is given in the supplementary material.
\end{IEEEproof}

\begin{remark}[Scope of the guarantee]
\label{rem:scope}
The guarantee established here is finite-time termination together with deadlock- and livelock-freedom; it is not convergence to a global optimum, which is unattainable under persistent partitioning. In particular, two UAVs in disconnected components may transiently hold the same task; such a conflict is resolved by the consensus rules once the components merge. Components that never regain contact cannot resolve such a conflict---an impossibility of coordination without any communication opportunity, not a deficiency of the algorithm.
\end{remark}

\subsection{Complexity Analysis}
\label{subsec:complexity}

We now quantify the per-UAV computational, communication, and storage costs of CC-OPI. Let $N$ be the number of UAVs, $M$ the number of tasks, and $C := \max_i C_i$ the maximum path length, which is bounded by the payload capacity. A central property of the event-driven design is that these costs are incurred only at discrete wake-ups, rather than continuously.

\subsubsection{Computational Complexity}
Consider a single wake-up of UAV $u_i$ (Lemma~\ref{lem:bounded}). Scoring a candidate path of length at most $C$ requires $O(C)$ time. The costs of the three phases are as follows.
\begin{itemize}
    \item \emph{Consensus} compares the local state with each of at most $N-1$ neighbor messages over the $M$ tasks, giving $O(NM)$.
    \item \emph{Removal} performs a single pass over the at most $C$ held tasks, each evaluated by an $O(C)$ path-cost recomputation, giving $O(C^2)$.
    \item \emph{Inclusion} admits at most $C$ tasks; each admission scans at most $M$ candidates, and evaluating one candidate searches at most $C$ insertion positions with an $O(C)$ feasibility-and-cost check per position. This dominates, at $O(M C^3)$.
\end{itemize}
The cost of one wake-up is therefore
\begin{equation}
\label{eq:comp_wakeup}
    O\!\left(M C^3 + N M\right).
\end{equation}
Because the capacity $C$ is a small constant in practice, \eqref{eq:comp_wakeup} reduces to $O(NM)$ per wake-up, matching the per-iteration cost of the underlying PI consensus. Over the bounded horizon of Theorem~\ref{thm:termination}, the number of wake-ups is at most $N \lceil T_{max}/\Delta t \rceil$, so the mission-level computation is finite and proportional to the number of simulated frames. We state the character of this bound precisely: the per-wake-up cost is polynomial in the discrete dimensions $N$, $M$, and $C$, whereas the frame count $T_{max}/\Delta t$ scales with the numerical magnitudes of the deadlines, the map diameter, the inverse speed, and the inverse time step---a pseudo-polynomial dependence unless these physical quantities are treated as fixed constants.

\subsubsection{Communication and Storage Complexity}
At each broadcast, a UAV transmits its score vector ($O(M)$), its assignment vector ($O(M)$), its logical-time vector ($O(N)$), and its versioned completed set ($O(M)$). The message size is thus
\begin{equation}
\label{eq:msg_size}
    O(M + N).
\end{equation}
A UAV broadcasts only when an event fires or its local state changes substantively, rather than once per frame. Let $E$ denote the total number of such broadcasts over a run. The total communication volume is then $O\!\left(E (M + N)\right)$, with the worst-case bound $E \le N \lceil T_{max}/\Delta t \rceil$ stated in Proposition~\ref{prop:comm}.

\begin{proposition}[Communication-overhead bound]
\label{prop:comm}
Over the horizon $[0, T_{max}]$ of Theorem~\ref{thm:termination}, the number of broadcasts satisfies $E \le N \lceil T_{max}/\Delta t \rceil$, and the total communication volume is $O\!\left(N \lceil T_{max}/\Delta t \rceil (M+N)\right)$.
\end{proposition}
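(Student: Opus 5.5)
The bound follows from counting frames, not from analyzing events. In Algorithm~\ref{alg:ccopi}, a broadcast occurs only on line~4, which is executed at most once per UAV per physical frame. So it suffices to bound the number of frames that occur before termination and multiply by $N$.

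\textbf{Step 1: bound the frames.} By Theorem~\ref{thm:termination}, every UAV reaches the terminal state by $T_{stop} \le T_{max}$. In that state its path is empty, no task is live, and no further event changes its state. By (A4), physical time advances in fixed steps $\Delta t > 0$, so $[0, T_{max}]$ contains at most $\lceil T_{max}/\Delta t \rceil$ frames.

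\textbf{Step 2: bound broadcasts per frame.} In each frame, UAV $u_i$ evaluates the broadcast condition once and sends at most one message, regardless of how many events fired in $E$. Wake-ups caused by received neighbor messages (line~5) run Phases~1--4 but do not trigger an extra transmission within the same frame. Hence
\[
E \le \sum_{i=1}^{N} \lceil T_{max}/\Delta t \rceil = N \lceil T_{max}/\Delta t \rceil .
\]

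\textbf{Step 3: bound the volume.} By~\eqref{eq:msg_size}, each message has size $O(M+N)$: the score, assignment, and completed-set components are each $O(M)$ by (A1) and Lemma~\ref{lem:monotone}, and the logical-time vector is $O(N)$. Multiplying by the count from Step~2 gives a total volume of $O\!\left(N \lceil T_{max}/\Delta t \rceil (M+N)\right)$.

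\textbf{Main obstacle.} The delicate point is ruling out message cascades within a single frame. A message received in frame $f$ changes the local state, which could seemingly force an immediate rebroadcast, which a neighbor then answers in the same frame, and so on. I would handle this by reading the algorithm's frame semantics literally. Line~4 precedes the processing in lines~5 onward, so any state change caused by messages in frame $f$ is transmitted no earlier than line~4 of frame $f+1$, and messages sent in a frame are delivered for processing no earlier than that same frame's line~5. Under this reading, each UAV makes at most one transmission per frame and no cascade occurs.

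One edge case remains: a UAV that is still idle after $T_{stop}$. If the simulation continues past termination, the state no longer changes and no event fires, except heartbeats. Since we count only over $[0, T_{max}]$, as the proposition states, this case is irrelevant to the bound.
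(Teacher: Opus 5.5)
Your proposal is correct and follows the paper's proof essentially verbatim: each UAV broadcasts at most once per frame, there are at most $\lceil T_{max}/\Delta t\rceil$ frames by (A4), and summing over the $N$ UAVs and multiplying by the $O(M+N)$ message size gives both bounds. You justify the once-per-frame property by the single broadcast step in the per-frame loop, whereas the paper cites (A4) alone; yours is the more precise justification, and the appeal to the terminal state in Step~1 is unnecessary because the count is already restricted to $[0,T_{max}]$.
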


\begin{IEEEproof}
By (A4), each UAV broadcasts at most once per frame and there are at most $\lceil T_{max}/\Delta t \rceil$ frames over $[0, T_{max}]$; summing over the $N$ UAVs bounds $E$, and multiplying by \eqref{eq:msg_size} bounds the volume.
\end{IEEEproof}

The bound in Proposition~\ref{prop:comm} coincides with that of a synchronous round-based scheme only in the degenerate case where an event fires at every UAV in every frame. Under intermittent connectivity, events are sparse: broadcasts are triggered by topology changes, completions, and periodic heartbeats, so $E$ scales with the number of such events rather than with the number of frames. This separation between worst-case and typical communication cost is the principal advantage of the event-driven paradigm over frame-synchronous polling; the measured communication cost is reported in Section~\ref{sec:simulation}. Finally, the per-UAV storage---the score, assignment, and logical-time vectors, the completed set, the emergency pool, and the execution path---is likewise $O(M + N)$, and no UAV stores any global structure indexed by both $N$ and $M$ simultaneously.

\section{Simulation and Experimental Results}
\label{sec:simulation}

This section evaluates CC-OPI through extensive simulations. We first describe the simulation setup, the evaluation protocol, and the performance metrics (Section~\ref{subsec:setup}). We then report a four-regime performance comparison against the classical and matched online baselines (Section~\ref{subsec:main_comparison}), an ablation study that isolates the contribution of each mechanism (Section~\ref{subsec:ablation}), sensitivity analyses with respect to the algorithm's hyperparameters (Section~\ref{subsec:hyperparam}) and to the communication radius $R_c$ (Section~\ref{subsec:radius}), and a robustness study under link-level packet loss, terrain occlusion, and dynamic task arrival (Section~\ref{subsec:robustness}). All reported results are averaged over $30$ independent random seeds. Tables report the mean and standard deviation; error bars or shaded bands in the figures denote $t$-based $95\%$ confidence intervals of the mean.

\subsection{Simulation Setup and Metrics}
\label{subsec:setup}

\subsubsection{Simulation Setup}
We simulate a post-disaster search and rescue scenario on a $10{,}000\times10{,}000$~m field. The UAV swarm is heterogeneous, comprising medical-supply UAVs (cruising speed $30$~m/s) and food-supply UAVs (cruising speed $50$~m/s) in a $1{:}1$ ratio; each UAV has a fixed payload capacity $C_i$ and a communication radius $R_c=250$~m. This radius is representative of the effective air-to-air range measured for UAV ad hoc networks with commodity radios \cite{hayat2016survey, asadpour2013characterizing}; its value is swept as an explicit variable in Section~\ref{subsec:radius}. In graph terms it is a severe connectivity regime: at $R_c=250$~m and the representative scale, the initial communication graph is typically empty, the time-averaged mean degree is $0.16$ with the largest connected component covering $15\%$ of the swarm, and a mission produces about $15$ pairwise contacts of median duration $12$~s; the corresponding statistics for every swept radius are tabulated in the supplementary material, so that the results can be mapped to other environments. Tasks are likewise split evenly between the two supply types, with service times of $300$~s (medical) and $350$~s (food), and each task is assigned a hard deadline drawn uniformly from a fixed range. The physical engine advances in steps of $\Delta t=1$~s, and the heartbeat that drives the global-time scan fires every $20$~s. In-range messages are delivered reliably within the same simulation step; the loss assumption is relaxed in the robustness study of Section~\ref{subsec:robustness}. UAVs, tasks, and deadlines are randomly generated per seed within these bounds. The principal parameters are summarized in Table~\ref{tab:setup}, and Table~\ref{tab:assumptions} consolidates the information and physical assumptions of the model; each assumption holds identically for every method and regime evaluated below.

To assess scalability, we vary the problem size across six levels, with the UAV count $N\in\{6,8,10,12,14,16\}$ and the task count $M$ fixed by the task-to-UAV ratio. Unless otherwise noted, two settings are used. The \emph{default setting} fixes the ratio at $M=2N$, the capacity at $C_i=3$, and the deadline range at $[0,2000]$~s. Under this setting, the task completion rate of the stronger methods approaches saturation, which compresses the differences between them. We therefore adopt a \emph{tight-constraint setting} as the primary scenario for the performance comparison: the load is raised to $M=3N$, the deadlines are tightened to $[0,1200]$~s, and the capacity is increased to $C_i=4$. We deliberately keep the capacity generous so that $C_i\cdot N>M$; the difficulty thus arises from the time windows rather than from a shortage of task slots, which avoids structurally favoring the online paradigm. Under the tight-constraint setting, the full-connectivity results drop to roughly $0.80$--$0.93$ depending on the method, so the metrics are no longer saturated and the differences between methods become measurable. A dedicated scaling study additionally runs CC-OPI at up to $128$ UAVs and $384$ tasks, reporting runtime, per-wake-up latency, memory, and message sizes (supplementary material).

\begin{table}[tbp]
\centering
\caption{Principal Simulation Parameters}
\label{tab:setup}
\renewcommand{\arraystretch}{1.2}
\begin{tabular}{lll}
\toprule
\textbf{Parameter} & \textbf{Default} & \textbf{Tight-constraint} \\
\midrule
Field size & \multicolumn{2}{c}{$10{,}000\times10{,}000$~m} \\
Communication radius $R_c$ & \multicolumn{2}{c}{$250$~m} \\
UAV speed (medical / food) & \multicolumn{2}{c}{$30$ / $50$~m/s} \\
Task service time (med. / food) & \multicolumn{2}{c}{$300$ / $350$~s} \\
Time step $\Delta t$ / heartbeat & \multicolumn{2}{c}{$1$~s / $20$~s} \\
\midrule
Task-to-UAV ratio $M{:}N$ & $2{:}1$ & $3{:}1$ \\
Payload capacity $C_i$ & $3$ & $4$ \\
Deadline range & $[0,2000]$~s & $[0,1200]$~s \\
\midrule
UAV count $N$ & \multicolumn{2}{c}{$\{6,8,10,12,14,16\}$} \\
Random seeds & \multicolumn{2}{c}{$30$ (mean $\pm$ std)} \\
\bottomrule
\end{tabular}
\end{table}

\begin{table}[tbp]
\centering
\caption{Information and Physical Modeling Assumptions}
\label{tab:assumptions}
\renewcommand{\arraystretch}{1.2}
\footnotesize
\begin{tabular}{@{}>{\raggedright\arraybackslash}p{0.24\columnwidth}>{\raggedright\arraybackslash}p{0.66\columnwidth}@{}}
\toprule
\textbf{Assumption} & \textbf{Scope and implication} \\
\midrule
Static tasks globally known & Every UAV knows all initial tasks (position, type, deadline) from deployment; only the assignment and completion state is subject to the communication constraints. \\
\addlinespace[2pt]
Injected tasks globally discovered & The existence of a runtime task is announced to all UAVs upon arrival; its allocation remains communication-constrained (Section~\ref{subsec:robustness}). \\
\addlinespace[2pt]
Synchronized clocks & Deadlines and the emergency criterion $t_{rem}(\tau_k)=D_k-t$ presume a shared physical time; clock skew is not modeled. \\
\addlinespace[2pt]
Ideal in-range links & Connectivity is a binary graph determined by the radius $R_c$ and line of sight; in-range messages arrive within one simulation step, without loss, delay, reordering, or bandwidth contention. The loss, delay, and reordering assumptions are relaxed in Section~\ref{subsec:robustness}. \\
\addlinespace[2pt]
Straight-line motion & UAVs fly at constant speed along straight segments; obstacles block communication only, never the flight path (Section~\ref{subsec:robustness}). \\
\addlinespace[2pt]
Unlimited endurance, no failures & No energy budget and no mid-mission UAV failure are modeled; the mission horizon is bounded by the deadlines alone (Section~\ref{sec:theoretical}). Both are future work (Section~\ref{sec:conclusion}). \\
\bottomrule
\end{tabular}
\end{table}

\subsubsection{Evaluation Protocol}
A central methodological point is that communication constraints are an intrinsic property of the target scenario rather than a handicap imposed on the baselines. Accordingly, we evaluate four regimes:
\begin{itemize}
    \item \textbf{(i) Full-connectivity reference:} the classical methods (PI, CBBA, and PI-MaxAss) running on a fully connected network. This regime reflects the idealized assumption under which these methods were designed, and serves as each method's ideal-connectivity reference; it is not an upper bound on the achievable performance (see Section~\ref{subsec:radius}).
    \item \textbf{(ii) Naive-transfer baseline:} the same classical methods running on the radius-limited network, denoted PI-C, CBBA-C, and PI-MaxAss-C. These represent the outcome of directly deploying a static method in the communication-constrained environment.
    \item \textbf{(iii) Matched online baselines:} the unmodified scoring functions and consensus rules of PI and CBBA executed in the same event-driven online regime as CC-OPI, denoted PI-OM and CBBA-OM, with negotiation re-run at every wake-up over the radius-limited network. These baselines separate the benefit of the online paradigm from the benefit of CC-OPI's specific mechanisms.
    \item \textbf{(iv) CC-OPI:} the proposed method on the radius-limited network.
\end{itemize}
For regime (ii), the static ``allocate-then-execute'' paradigm reaches a single consensus over the connectivity available at $t=0$ and then freezes the assignment for execution, without replanning while the topology evolves. This is a faithful reflection of the inherent limitation of the static paradigm---execution-time replanning is precisely the online paradigm that constitutes our contribution---rather than an artificially weakened baseline.

Beyond the four regimes, the main comparison includes a static geographic-partition baseline, denoted GEO, representing the territory-based alternative of Section~\ref{sec:related}: every task is statically owned by the type-compatible UAV whose deployment position is nearest (a Voronoi partition over the pre-mission deployment, shared before launch), each UAV plans its own region by cheapest feasible insertion without any bidding or consensus, completed-task knowledge is identical to CC-OPI's, and a UAV whose own insertable tasks are exhausted may adopt a flagged near-deadline task from any region, duplicate pursuits being resolved by physical confirmation on arrival. GEO runs in the same online paradigm on the radius-limited network, on the same seeds.

For regime (iii), the scoring and consensus logic of the wrapped methods is left untouched; a thin wrapper adds only the adaptations without which any allocation method fails trivially online. Completion knowledge is identical to CC-OPI's: tasks confirmed completed---by the UAV itself, observed on arrival, or learned from neighbors' broadcasts---are evicted from the plan and withdrawn from bidding; a planned task whose estimated arrival time exceeds its deadline is released together with its consensus entry. For CBBA only, the time-discounted bids of the held tasks are re-scored from the current state at each wake-up, since a bid placed early in absolute time could otherwise never be outbid later; PI recomputes its removal scores on every invocation and needs no refresh. All four regimes share the same seeds and problem instances, per-frame topology and communication opportunities, task visibility and deadline semantics, and broadcast-count accounting; in regimes (iii) and (iv), motion is never suspended.

\subsubsection{Evaluation Metrics}
We report the following metrics, computed from the final simulation state:
\begin{itemize}
    \item \emph{Task completion rate} (TCR, $\uparrow$): the fraction of \emph{reachable} tasks that a type-compatible UAV reaches, and starts serving, no later than their deadline and subsequently completes (arrival exactly at the deadline counts as on time; see Section~\ref{problem}). The criterion is enforced directly: the simulator records the first service-start timestamp of every task, and a physically finished service that started late is excluded from all completion-based metrics. Such exclusions are rare---a dedicated audit over the central configurations ($720$ runs) finds $18$ late starts among $15{,}901$ finished services ($0.1\%$), most of them in ablated variants with protection mechanisms disabled. The denominator excludes tasks that are physically impossible to reach in time even under direct flight, so that the metric is not distorted by infeasible instances. TCR is the primary metric for cross-algorithm comparison.
    \item \emph{Makespan} ($\downarrow$): the completion time of the last completed task.
    \item \emph{Total distance} ($\downarrow$): the aggregate flight distance of all UAVs.
    \item \emph{Wasted distance} ($\downarrow$): the flight distance rendered ineffective by reassignment. For the static baselines, allocation precedes any movement, so this quantity is zero by construction; it is incurred only by online replanning.
    \item \emph{Reassignment count} ($\downarrow$): the number of ownership transfers, reflecting coordination overhead and assignment oscillation.
    \item \emph{Messages per completed task} ($\downarrow$): the number of messages normalized by the number of completed tasks, where one message is one event-driven broadcast for the online methods and one per-round state exchange for the static methods; we normalize because the two semantics make absolute counts incomparable. Each broadcast carries a serialized payload of $O(M+N)$ (Section~\ref{sec:theoretical})---at most $16M+8N+8$ bytes under the canonical encoding, i.e., $568$~B at the representative scale, the maximum being reached only when the monotone completed set is full. We additionally record the receiver-side deliveries---one delivery being the receipt of a broadcast by one in-range receiver---and the serialized payload bytes of every message under a canonical binary encoding ($8$-byte floats, $4$-byte integers), so that the communication cost can be compared in sender-side and receiver-side units alike. All communication metrics are application-layer quantities---broadcast attempts, successful deliveries, and canonical payload bytes---and do not measure airtime: over-the-air contention, retransmission, and channel occupation are not modeled.
    \item \emph{Emergency rescue rate} ($\uparrow$): among all tasks that ever entered the emergency pool, the fraction ultimately completed. This metric is meaningful only for CC-OPI, since the baselines maintain no emergency pool; it is therefore used solely as internal evidence in the ablation study and never for cross-algorithm comparison.
\end{itemize}

\subsubsection{Statistical Protocol}
All algorithms and variants are evaluated on the same $30$ problem instances (seeds), so every cross-method comparison is paired by instance. For the main comparison of Section~\ref{subsec:main_comparison}, which we treat as the planned primary analysis, we report per-instance paired differences with $95\%$ confidence intervals obtained from a percentile bootstrap ($10^4$ resamples). The comparison set is small and directionally specified, though not formally preregistered, and the same seeds were also used during development and sensitivity exploration; we therefore rely on effect sizes with confidence intervals rather than dichotomous significance claims and apply no multiplicity correction. The mechanism, sensitivity, and robustness studies (Sections~\ref{subsec:ablation}--\ref{subsec:robustness}) are exploratory. Because the six problem scales sharing a seed are generated from nested random streams, comparisons pooled across scales use a seed-cluster bootstrap that resamples seeds and retains all scales within each draw; the per-scale intervals remain the primary evidence. With $30$ seeds, the standard error of the mean (SEM) of the TCR is about one percentage point, and the paired confidence intervals have half-widths below two percentage points. Finally, the headline comparisons are re-validated on $30$ fresh hold-out seeds that played no role in development or tuning; the effect sizes replicate (supplementary material).

\subsection{Main Performance Comparison}
\label{subsec:main_comparison}

We now compare CC-OPI against the classical and matched online baselines under the four regimes defined in Section~\ref{subsec:setup}, with the tight-constraint setting as the primary scenario. Fig.~\ref{fig:hard_tcr} reports the TCR against the problem scale---with CC-OPI on the radius-limited network drawn as a shared reference line in both panels---and Fig.~\ref{fig:hard_eff} reports the corresponding wasted distance and communication overhead. Table~\ref{tab:hard_summary} summarizes all metrics, including the makespan, at a representative scale of $10$ UAVs and $30$ tasks.

\begin{figure*}[!t]
\centering
\subfloat[Full-connectivity baselines (reference)]{\includegraphics[width=0.44\textwidth]{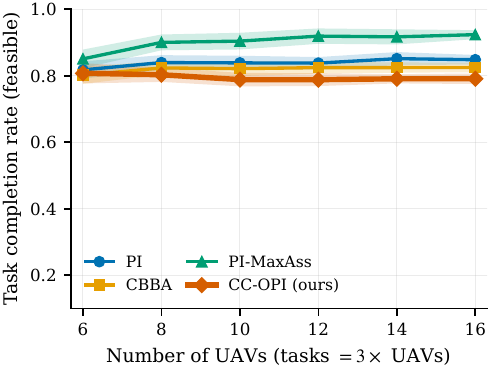}\label{fig:hard_tcr_full}}
\hfil
\subfloat[Constrained baselines (naive transfer)]{\includegraphics[width=0.44\textwidth]{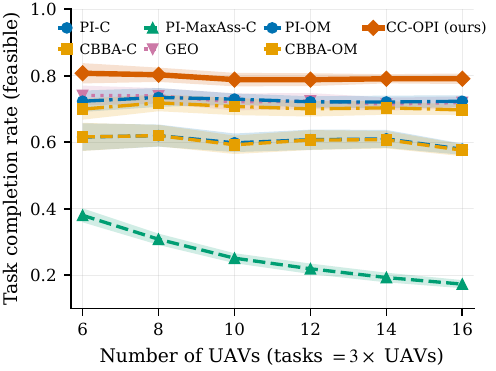}\label{fig:hard_tcr_con}}
\caption{Task completion rate versus problem scale under the tight-constraint setting.}
\label{fig:hard_tcr}
\end{figure*}

\subsubsection{Task Completion Rate}
The TCR results reveal five findings.

\emph{First, the naive transfer of static methods collapses under communication constraints, whereas CC-OPI remains robust.} As shown in Fig.~\ref{fig:hard_tcr_con}, the constrained baselines PI-C and CBBA-C achieve a TCR of only about $0.58$--$0.62$, and their performance further deteriorates as the scale grows. CC-OPI, operating under the same $250$~m communication limit, sustains a TCR of roughly $0.80$ that is essentially flat across scales. The margin over the naive baselines is between $18$ and $22$ percentage points and is widest at the largest scale (paired differences pooled across scales, seed-cluster bootstrap: $+19.0$ percentage points against PI-C, $95\%$ CI $[+17.5, +20.5]$, and $+19.2$ against CBBA-C, $[+17.8, +20.6]$), indicating that CC-OPI's online replanning compensates for the connectivity that the static paradigm assumes but cannot obtain.

\emph{Second, the matched online comparison attributes this margin to two sources: the online paradigm accounts for the larger share, and the CC-OPI mechanisms add a consistent gain on top.} The naive baselines conflate two differences---the execution regime and the allocation logic---so their collapse alone cannot isolate the value of CC-OPI's mechanisms. The matched baselines PI-OM and CBBA-OM remove the first difference. Executing the unmodified PI rules online recovers $+12.1$ percentage points over the frozen PI-C (paired difference pooled across scales, $95\%$ CI $[+10.9, +13.2]$; CBBA-OM over CBBA-C: $+10.1$ $[+9.0, +11.3]$). On top of this paradigm gain, CC-OPI leads PI-OM by a further $+6.9$ points ($[+6.1, +7.8]$ pooled) and CBBA-OM by $+9.0$ ($[+8.1, +10.0]$), and the per-scale differences neither vanish nor shrink as the swarm grows (Fig.~\ref{fig:hard_tcr_con}); the mechanism-level attribution of this gain is given in Section~\ref{subsubsec:backbone}. CBBA-OM also exposes why commitment protection matters: a consensus auction run online without it oscillates severely, with a median of $660$ ownership transfers per run at the representative scale (interquartile range $216$--$1295$, maximum $3186$) against $19$ ($16$--$23$, maximum $29$) for CC-OPI, and roughly twice CC-OPI's messages per completed task (Table~\ref{tab:hard_summary}). CBBA releases its whole bundle suffix on every lost conflict and each released task counts as one transfer, so its absolute count is not directly comparable to the PI-based methods; the order-of-magnitude gap and the doubled message rate nonetheless indicate genuine instability.

\emph{Third, the geographic-partition baseline confirms that the mechanisms add value beyond static territories.} GEO already avoids the frozen-consensus collapse: pooled across scales it leads PI-C by $+12.2$ percentage points ($95\%$ CI $[+11.0, +13.3]$, seed-cluster bootstrap) and is statistically indistinguishable from PI-OM ($+0.1$ $[-0.9, +1.1]$)---spatial decomposition alone recovers roughly the paradigm gain. CC-OPI nevertheless leads GEO by $+6.8$ points pooled ($[+5.9, +7.8]$), with per-scale differences of $+6.4$ to $+7.6$ whose intervals all exclude zero, and incurs $32.6$~km less redundant travel at the representative scale ($[-40.1, -25.2]$; Table~\ref{tab:hard_summary}): static territories cannot rebalance load across regions, and their emergency adoptions duplicate pursuits that CC-OPI resolves through consensus. The value of the PI machinery is thus established against the territory-based alternative, not only against frozen consensus.

\emph{Fourth, CC-OPI approaches the full-connectivity performance of PI and CBBA using only local communication.} Despite having no access to global information, CC-OPI attains a TCR comparable to that of CBBA (paired difference pooled across scales: $-2.5$ percentage points, $95\%$ CI $[-3.3, -1.8]$) and within roughly six percentage points of PI ($-5.0$ $[-6.6, -3.4]$ at the representative scale), both evaluated on a fully connected network (Fig.~\ref{fig:hard_tcr_full}). The remaining gap to PI-MaxAss, the strongest full-connectivity reference (about $0.85$--$0.92$), reflects the price of operating in a decentralized manner without the global picture; we do not claim to match a method that re-optimizes the complete assignment under full connectivity.

\emph{Fifth, the methods that optimize most aggressively for the global picture are the most brittle when that picture is unavailable.} PI-MaxAss is the strongest method under full connectivity (up to about $0.92$) but the weakest under the communication constraint: PI-MaxAss-C degrades from $0.380$ at the smallest scale to $0.173$ at the largest. Its two-stage rescheduling, which presumes a globally consistent view to maximize the number of assigned tasks, becomes counterproductive once that view is fragmented across information islands.

\subsubsection{Efficiency and Overhead}
Fig.~\ref{fig:hard_eff} and Table~\ref{tab:hard_summary} characterize the cost of this robustness. CC-OPI's makespan is comparable to CBBA's and lower than PI-MaxAss's, so the TCR gain does not slow the mission. The cost appears in two coupled quantities: the wasted distance, zero by construction for the static baselines, grows with scale under online reassignment, and the communication overhead is about an order of magnitude above PI or CBBA, because CC-OPI coordinates continuously throughout execution rather than only once beforehand. Against the matched online baselines, however, the comparison reverses: at every scale CC-OPI incurs the least wasted distance and the fewest messages per completed task of the three online consensus methods ($79.6$~km and about $55$ messages at $16$ UAVs, versus $95.5$/$114.8$~km and $73$/$130$ messages for PI-OM/CBBA-OM), so these costs are a property of online execution itself---one that CC-OPI's mechanisms mitigate rather than aggravate. The geographic baseline broadcasts even less, since it exchanges only completion facts, but pays for the missing negotiation with the highest wasted distance of every constrained method at every scale (Fig.~\ref{fig:hard_eff}). CC-OPI's own makespan exceeds the matched baselines' (about $1294$~s versus $1170$--$1180$~s at the representative scale), the counterpart of the emergency-rescue behavior: the additional completions it secures are precisely near-deadline tasks that start service on time but finish service after the deadline. These results should be read as a trade-off rather than an advantage: CC-OPI spends additional communication and some redundant travel to secure its completion gains.

Table~\ref{tab:comm} completes the communication accounting in receiver-side and byte-level units, and the choice of unit matters. In sender-side units---application-layer broadcast attempts and transmitted payload bytes---CC-OPI indeed pays about an order of magnitude more than the full-connectivity methods. The receiver side inverts the picture: on the sparse $250$~m network a CC-OPI broadcast reaches only $0.25$ receivers on average, whereas every full-connectivity exchange is delivered to all $N-1$ peers, so per completed task CC-OPI induces $14.5$ deliveries ($6.5$~kB) against $45.7$ ($19.8$~kB) for PI---about a third of the received traffic of the idealized consensus. The naive constrained baselines mark the opposite degenerate corner: their broadcasts on the typically empty initial graph reach no receiver at all, which quantifies why the frozen transfer fails. Among the three online methods, CC-OPI is the cheapest in every unit; the order-of-magnitude penalty is thus a statement about application-layer broadcast attempts on a mostly empty neighborhood---since channel occupation and contention are not modeled, we draw no conclusion about airtime or congestion.

Placing every method in a single completion-versus-cost plane at the representative scale (Fig.~S1 in the supplementary material) summarizes this trade-off: the full-connectivity methods reach a comparable TCR at roughly an order of magnitude less communication (an advantage derived from the idealized connectivity they assume); the naively transferred baselines fall to the low-completion region; the matched online baselines sit below and to the right of CC-OPI, which dominates them on both axes; and the geographic baseline sits below CC-OPI at a comparable message cost. Among the evaluated methods, none dominates CC-OPI within the constrained regime: reaching its TCR otherwise requires the full-connectivity assumption that the target scenario denies.

\begin{figure}[tbp]
\centering
\includegraphics[width=\columnwidth]{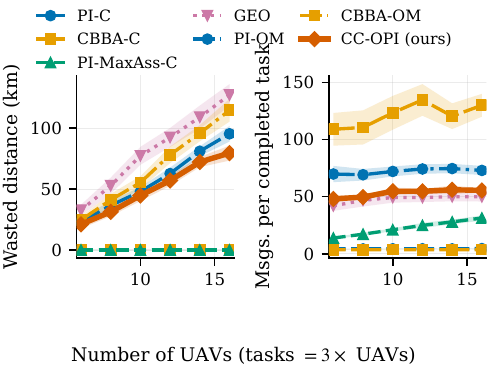}
\caption{Coordination overhead versus problem scale under the tight-constraint setting.}
\label{fig:hard_eff}
\end{figure}

\begin{table*}[!t]
\centering
\caption{Performance Summary at $10$ UAVs / $30$ Tasks (Tight-Constraint Setting)}
\label{tab:hard_summary}
\renewcommand{\arraystretch}{1.25}
\setlength{\tabcolsep}{4.5pt}
\begin{tabular}{llcccccc}
\toprule
\textbf{Regime} & \textbf{Method} & \textbf{TCR}\,$\uparrow$ & \textbf{Makespan (s)}\,$\downarrow$ & \textbf{Total Dist. (km)}\,$\downarrow$ & \textbf{Wasted Dist. (km)}\,$\downarrow$ & \textbf{Reassign.}\,$\downarrow$ & \textbf{Msg./Task}\,$\downarrow$ \\
\midrule
\multirow{3}{*}{Full conn. (reference)}
 & PI        & $0.839${\scriptsize$\pm0.058$} & $1202${\scriptsize$\pm47$}  & $57.6${\scriptsize$\pm9.0$}   & $0.0$ & $14.4${\scriptsize$\pm4.5$} & $5.08${\scriptsize$\pm0.84$} \\
 & CBBA      & $0.821${\scriptsize$\pm0.060$} & $1256${\scriptsize$\pm93$}  & $59.6${\scriptsize$\pm11.3$}  & $0.0$ & $25.1${\scriptsize$\pm6.0$} & $4.25${\scriptsize$\pm0.42$} \\
 & PI-MaxAss & $0.904${\scriptsize$\pm0.067$} & $1394${\scriptsize$\pm58$}  & $107.3${\scriptsize$\pm11.8$} & $0.0$ & $0.0$ & $11.31${\scriptsize$\pm1.47$} \\
\midrule
\multirow{3}{*}{Constrained (naive)}
 & PI-C        & $0.598${\scriptsize$\pm0.075$} & $1313${\scriptsize$\pm110$} & $76.2${\scriptsize$\pm40.7$}  & $0.0$ & $0.0$ & $4.40${\scriptsize$\pm0.68$} \\
 & CBBA-C      & $0.592${\scriptsize$\pm0.074$} & $1315${\scriptsize$\pm109$} & $81.6${\scriptsize$\pm43.4$}  & $0.0$ & $0.0$ & $3.66${\scriptsize$\pm0.50$} \\
 & PI-MaxAss-C & $0.251${\scriptsize$\pm0.040$} & $1294${\scriptsize$\pm120$} & $116.8${\scriptsize$\pm71.8$} & $0.0$ & $0.0$ & $21.04${\scriptsize$\pm3.44$} \\
\midrule
Constrained (geographic)
 & GEO & $0.721${\scriptsize$\pm0.069$} & $1252${\scriptsize$\pm104$} & $119.0${\scriptsize$\pm17.1$} & $77.3${\scriptsize$\pm19.7$} & $49.1${\scriptsize$\pm8.8$} & $49.21${\scriptsize$\pm11.96$} \\
\midrule
\multirow{2}{*}{Constrained (matched)}
 & PI-OM   & $0.730${\scriptsize$\pm0.045$} & $1179${\scriptsize$\pm119$} & $93.0${\scriptsize$\pm12.0$}  & $48.0${\scriptsize$\pm9.7$}  & $18.5${\scriptsize$\pm5.9$} & $72.06${\scriptsize$\pm11.61$} \\
 & CBBA-OM & $0.708${\scriptsize$\pm0.069$} & $1168${\scriptsize$\pm105$} & $97.3${\scriptsize$\pm15.1$}  & $55.4${\scriptsize$\pm14.0$} & $890${\scriptsize$\pm830$}  & $123.35${\scriptsize$\pm39.54$} \\
\midrule
Ours & \textbf{CC-OPI} & $0.789${\scriptsize$\pm0.054$} & $1294${\scriptsize$\pm92$} & $105.0${\scriptsize$\pm15.4$} & $44.7${\scriptsize$\pm12.0$} & $19.3${\scriptsize$\pm4.8$} & $54.74${\scriptsize$\pm8.93$} \\
\bottomrule
\end{tabular}
\end{table*}

\begin{table}[tbp]
\centering
\caption{Communication Accounting per Completed Task at $10$ UAVs / $30$ Tasks (Tight-Constraint Setting)}
\label{tab:comm}
\renewcommand{\arraystretch}{1.25}
\setlength{\tabcolsep}{4pt}
\footnotesize
\begin{tabular}{llrrrr}
\toprule
\textbf{Regime} & \textbf{Method} & \textbf{Bcast.} & \textbf{Deliv.} & \textbf{Sent (kB)} & \textbf{Recv.\ (kB)} \\
\midrule
\multirow{3}{*}{\shortstack[l]{Full conn.\\(reference)}}
 & PI          & $5.1$   & $45.7$  & $2.2$  & $19.8$ \\
 & CBBA        & $4.2$   & $38.2$  & $1.8$  & $16.6$ \\
 & PI-MaxAss   & $11.3$  & $101.8$ & $4.9$  & $44.2$ \\
\midrule
\multirow{3}{*}{\shortstack[l]{Constrained\\(naive)}}
 & PI-C        & $4.4$   & $0.0$   & $1.9$  & $0.0$ \\
 & CBBA-C      & $3.7$   & $0.0$   & $1.6$  & $0.0$ \\
 & PI-MaxAss-C & $21.0$  & $0.0$   & $9.1$  & $0.0$ \\
\midrule
\multirow{2}{*}{\shortstack[l]{Constrained\\(matched)}}
 & PI-OM       & $72.1$  & $35.1$  & $32.3$ & $16.0$ \\
 & CBBA-OM     & $123.3$ & $86.8$  & $55.4$ & $39.4$ \\
\midrule
Ours & \textbf{CC-OPI} & $54.7$ & $14.5$ & $24.4$ & $6.5$ \\
\bottomrule
\end{tabular}
\end{table}

\subsubsection{Default Setting}
The same qualitative pattern holds under the relaxed default setting: CC-OPI attains a TCR of about $0.97$--$0.98$, close to the full-connectivity reference, whereas PI-C and CBBA-C drop to roughly $0.71$--$0.78$ and PI-MaxAss-C falls to $0.22$ at the largest scale---the collapse is a structural consequence of intermittent connectivity, not an artifact of the tight-constraint setting.

\subsection{Ablation Study}
\label{subsec:ablation}

To isolate the contribution of each mechanism, we disable the spatial locality penalty, the state-locking mechanism, and the emergency pool one at a time, and compare each variant against the full model. All variants run on the radius-limited network over the default scales, since enabling full connectivity would mask the differences that these mechanisms address. A sunk-cost factor explored in early development was found to have a negligible effect and was removed from the algorithm; it is therefore not among the ablated mechanisms. Because the TCR is near saturation under this setting, each mechanism is assessed primarily through the metric it most directly affects rather than through the TCR alone. Table~\ref{tab:ablation} reports the percent change of each variant relative to the full model, averaged over the scales; the per-scale curves, omitted for space, show that these trends are consistent across scales.

\begin{table}[tbp]
\centering
\caption{Ablation: Percent Change Relative to the Full Model}
\label{tab:ablation}
\renewcommand{\arraystretch}{1.25}
\setlength{\tabcolsep}{5pt}
\begin{tabular}{lccc}
\toprule
\textbf{Metric} & \textbf{w/o Spatial} & \textbf{w/o StateLock} & \textbf{w/o Emergency} \\
\midrule
Task completion rate & $+0.1\%$  & $-0.7\%$ & $-3.6\%$ \\
Wasted distance      & $+26.5\%$ & $+4.5\%$ & $-7.9\%$ \\
Reassignment count   & $+37.6\%$ & $-2.4\%$ & $-12.1\%$ \\
Total distance       & $+12.0\%$ & $+2.8\%$ & $-15.6\%$ \\
Makespan             & $-7.2\%$  & $+4.7\%$ & $-9.2\%$ \\
\bottomrule
\end{tabular}
\end{table}

\subsubsection{Spatial Locality Penalty}
Removing the spatial penalty leaves the TCR essentially unchanged but inflates the coordination cost markedly: the wasted distance rises by $26.5\%$, the reassignment count by $37.6\%$, and the total flight distance by $12.0\%$ (Table~\ref{tab:ablation}). Without the penalty, disconnected UAVs bid greedily on the same remote, high-value tasks, so once subnets reconnect, consensus forces costly handovers and redundant travel. The spatial penalty thus delivers its intended effect---suppressing blind cross-regional chasing and the oscillation it induces---without sacrificing completion. The mild reduction in makespan when the penalty is removed indicates that unconstrained chasing can occasionally finish a few tasks sooner, but at a disproportionate cost in wasted travel.

\subsubsection{Emergency Pool}
The emergency pool exhibits a clear trade-off. Disabling it lowers the TCR by $3.6\%$ in relative terms, while simultaneously reducing the total distance ($-15.6\%$), wasted distance ($-7.9\%$), reassignments ($-12.1\%$), and makespan ($-9.2\%$). The pool spends additional flight effort to rescue tasks approaching their deadlines that the spatial penalty would otherwise cause UAVs to neglect; switching it off recovers those costs but forfeits the rescued completions. The mechanism should therefore be understood as trading travel for TCR under time pressure, not as a means of reducing distance. For the variants that retain the pool, the emergency rescue rate is high (about $0.92$--$0.96$ across scales), confirming that the flagged near-deadline tasks are in fact being recovered.

\subsubsection{State Lock}
\label{subsubsec:lock}
In the cross-scale aggregate, removing the state lock produces only small changes in the wasted distance ($+4.5\%$) and in the TCR ($-0.7\%$; the strict service-start accounting reveals that, without the lock, a handful of completions start after their deadline), which might suggest that the lock is inconsequential. Two considerations show otherwise. First, the lock is foremost a correctness and stability mechanism: it protects an in-progress or irreversible physical action from preemption and underpins the finite-time termination guarantee of Section~\ref{sec:theoretical}---without it, $\Omega_{lock}$ is empty and the eventual-commitment property (Lemma~\ref{lem:commit}) loses its premise---a role that an aggregate efficiency metric does not capture. Second, its efficiency benefit is masked by scale aggregation and becomes visible only when the connectivity regime is varied at a fixed scale.

\begin{figure}[tbp]
\centering
\includegraphics[width=0.95\columnwidth]{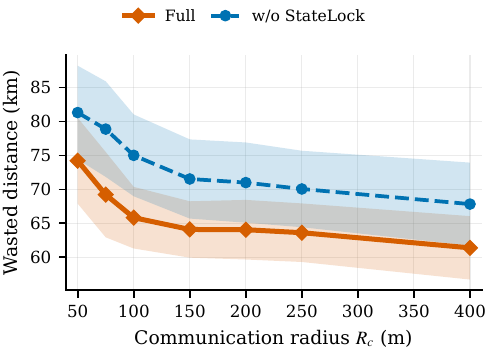}
\caption{State-lock ablation versus communication radius $R_c$ at a fixed scale.}
\label{fig:lock_ablation}
\end{figure}

To expose this, we sweep the communication radius $R_c$ at a fixed scale and compare the full model against the no-lock variant on shared seeds (Fig.~\ref{fig:lock_ablation}). With the lock enabled, the wasted distance is consistently $9\%$--$12\%$ lower across the entire range, a per-radius paired difference of $6$--$10$~km whose $95\%$ CI excludes zero at every radius; the makespan is directionally lower as well, although its per-radius paired CIs generally include zero. Crucially, the benefit is largest at small-to-medium radii ($R_c\approx75$--$100$~m) and decays as $R_c$ grows. This trend matches the mechanism's rationale: at smaller radii the network fragments more frequently, creating more opportunities for a UAV to be preempted mid-flight, so protecting its committed approach yields a larger saving; as connectivity improves, such mid-flight conflicts become rare and the lock's effect diminishes. The aggregate ablation therefore understates the lock's role, which the radius-resolved view makes explicit.

\subsubsection{Cumulative Ablation from the Online Backbone}
\label{subsubsec:backbone}
The preceding ablations disable one mechanism at a time. To connect this mechanism-level view with the paradigm-level attribution of Section~\ref{subsec:main_comparison}, we additionally enable the mechanisms cumulatively on top of the bare online backbone---the event-driven execution with every CC-OPI-specific score disabled ($\lambda=0$, no state lock, no emergency pool, $\beta=0$)---at the representative tight-constraint scale on the same $30$ seeds. The backbone retains the completed-set synchronization, which is correctness infrastructure for online execution rather than a scoring mechanism, and it is statistically indistinguishable from PI-OM (paired difference $+0.5$ percentage points, $95\%$ CI $[-0.7, +1.7]$), cross-validating the two implementations. Table~\ref{tab:backbone} reports the ladder, in which each row enables one further mechanism on top of the preceding row and $\Delta$TCR is the per-seed paired increment over that row with a bootstrap $95\%$ CI; the enabling order was fixed in advance and is not reordered post hoc.

\begin{table}[!t]
\centering
\caption{Cumulative (Backbone) Ablation at $10$ UAVs / $30$ Tasks (Tight-Constraint Setting)}
\label{tab:backbone}
\renewcommand{\arraystretch}{1.25}
\setlength{\tabcolsep}{2pt}
\footnotesize
\begin{tabular}{lcccc}
\toprule
\textbf{Variant} & \textbf{TCR}\,$\uparrow$ & \textbf{$\Delta$TCR (pt)} & \textbf{Wasted (km)} & \textbf{Msg./Task} \\
\midrule
Backbone            & $0.734$ & ---                          & $48.3$ & $109.9$ \\
$+$Spatial          & $0.759$ & $+2.5$\,$[+0.8, +4.0]$       & $44.5$ & $100.8$ \\
$+$Lock             & $0.771$ & $+1.2$\,$[+0.6, +1.9]$       & $44.1$ & $52.1$ \\
$+$Urgency          & $0.766$ & $-0.5$\,$[-1.2, +0.1]$       & $43.8$ & $54.1$ \\
$+$Emergency (Full) & $0.789$ & $+2.2$\,$[+0.8, +3.7]$       & $44.7$ & $54.7$ \\
\bottomrule
\end{tabular}
\end{table}

The ladder decomposes the mechanism gain. The spatial penalty and the emergency pool carry the largest completion improvements ($+2.5$ and $+2.2$ points, each with a CI excluding zero), and the spatial penalty simultaneously reduces the wasted distance and the reassignments, consistent with Section~\ref{subsec:ablation}. The state lock contributes a statistically resolved $+1.2$ points---part of which the strict service-start accounting uncovers, by no longer crediting the late completions that the unlocked variants accumulate---and, more strikingly, halves the communication cost: the messages per completed task drop from about $100$ to $52$ when the lock is enabled, because irrevocable ownership near completion stops the consensus vectors from oscillating and thereby suppresses the event-driven rebroadcasts---the same instability that inflates the message and reassignment counts of CBBA-OM in Section~\ref{subsec:main_comparison}. The urgency rung, in contrast, yields no measurable TCR gain on its own ($-0.5$ points, CI spanning zero, consistent with the $\beta$ sweep of Section~\ref{subsec:hyperparam}); we accordingly do not claim the urgency term as an independently validated contribution, but retain it as a design component whose intended effect is subsumed by the emergency layer. In total, the mechanisms add $+5.4$ points over the backbone ($95\%$ CI $[+3.4, +7.3]$), which, combined with the backbone--PI-OM equivalence, reproduces the $+5.9$-point gap between CC-OPI and PI-OM at this scale.

\subsection{Sensitivity to Hyperparameters}
\label{subsec:hyperparam}

CC-OPI exposes a small set of hyperparameters that were fixed to a single configuration across every experiment above and never tuned per scenario. To verify that the preceding conclusions are not an artifact of this particular choice, we vary the three free strength and threshold parameters one at a time at the representative tight-constraint scale---the spatial-penalty strength $\lambda$, the urgency-penalty weight $\beta$, and the emergency-pool margin $T_{emergency}$---holding the remaining parameters, which are fixed by physical or temporal reasoning, at their defaults. The TCR is insensitive to $\lambda$ and $\beta$ within one SEM: the benefit of the spatial penalty appears in the coordination cost (disabling it raises the wasted distance by $11\%$ and the reassignment count by $18\%$) and plateaus for $\lambda\gtrsim5$, and the weak effect of $\beta$ is consistent with the unresolved urgency rung of Section~\ref{subsubsec:backbone}. The emergency margin $T_{emergency}$ is the one genuinely sensitive parameter: an overly large margin flags too many tasks as emergencies too early, so the spatial-penalty exemption ($\mu=1$) erodes the regionalized operation, and the TCR degrades gradually from about $0.80$ to $0.74$ as the margin grows from the default $700$~s to $1500$~s, the default sitting inside a wide safe plateau. The attractiveness bonus $\mathcal{A}_{bonus}$ acts as a soft infinity whose only requirement is to be large enough to favor a flagged task in the inclusion ranking: a sweep over $\mathcal{A}_{bonus}\in[500, 20000]$ produces identical results at every value, and removing the bonus entirely lowers the TCR and the emergency rescue rate by less than one percentage point each (within one SEM)---the cross-regional rescues are carried by the spatial-penalty exemption ($\mu=1$) rather than by the bonus. The full protocol and per-parameter figures are given in the supplementary material (Fig.~S3). Because the parameter values were fixed once and held fixed across all experiments, the gains reported in Sections~\ref{subsec:main_comparison}--\ref{subsec:robustness} are not the product of per-scenario tuning.

\subsection{Sensitivity to the Communication Radius}
\label{subsec:radius}

The premise of this work is a limited communication radius, so we now examine how performance varies with $R_c$ as the explicit independent variable. We fix the tight-constraint scenario at $10$ UAVs and $30$ tasks and sweep $R_c$ from $75$~m to $15{,}000$~m, the latter exceeding the map diagonal and thus approximating full connectivity. To isolate the effect of the radius, terrain occlusion is disabled for this experiment; line-of-sight blockage is treated separately as a robustness factor. CC-OPI runs on the radius-limited network; the static methods run as constrained baselines (PI-C, CBBA-C, PI-MaxAss-C), and their full-connectivity counterparts are drawn as horizontal reference lines. Fig.~\ref{fig:radius} reports the TCR, wasted distance, and communication overhead against $R_c$.

\emph{CC-OPI degrades gracefully over the swept range, whereas the static baselines collapse.} As $R_c$ decreases from near-full connectivity to $75$~m, the TCR of CC-OPI declines gently from about $0.88$ to $0.79$, a drop of roughly ten percentage points. Over the same range, the constrained baselines fall steeply: PI-C and CBBA-C drop from about $0.84$ and $0.82$ to about $0.60$, and PI-MaxAss-C from $0.90$ to $0.25$. At the representative deployment radius $R_c=250$~m, the margins---about $20$ percentage points over PI-C and CBBA-C and more than $50$ over PI-MaxAss-C---agree with Table~\ref{tab:hard_summary}. Notably, the baselines are insensitive to $R_c$ below roughly $500$~m: at these radii the initial communication graph is typically empty, so the static methods reduce to independent greedy allocation regardless of the exact value of $R_c$.

\emph{The large-radius limit validates the comparison and locates the crossover.} As $R_c$ grows, each constrained baseline converges to its own full-connectivity reference (PI-C to $0.839$, CBBA-C to $0.821$, and PI-MaxAss-C to $0.904$, matching Table~\ref{tab:hard_summary}), confirming that the degradation at small $R_c$ is attributable to the loss of connectivity alone. Within the swept range, CC-OPI leads every evaluated baseline for all $R_c\le4000$~m. The only method that overtakes it is PI-MaxAss-C, and only once the network is nearly complete ($R_c\gtrsim8000$~m), where its two-stage global re-optimization regains the conditions for which it was designed. The resulting crossover lies at a radius of several kilometers---more than an order of magnitude above any radius representative of a post-disaster deployment---so within the regime of interest CC-OPI is the only method evaluated that sustains a high TCR. We further note that, at the full-connectivity end of the sweep, CC-OPI ($0.883$) even exceeds the static PI ($0.839$) and CBBA ($0.821$), since its continuous online execution does not freeze after an initial allocation; we report this observation without claiming it as a design objective.

\emph{Coordination cost scales with connectivity.} The secondary metrics expose the cost structure of the online paradigm (Fig.~\ref{fig:radius}). As $R_c$ increases, CC-OPI's wasted distance falls sharply, from about $46$~km at $R_c=75$~m to under $1$~km near full connectivity, because better connectivity resolves conflicts earlier and avoids redundant travel. Conversely, its communication overhead rises monotonically, from about $52$ to $325$ messages per completed task, as a larger radius admits more neighbors and thus more frequent broadcasts. CC-OPI therefore exchanges wasted travel for messages as connectivity improves; at the representative radius it operates in the regime of higher wasted distance and lower message count.

\begin{figure*}[!t]
\centering
\includegraphics[width=0.9\textwidth]{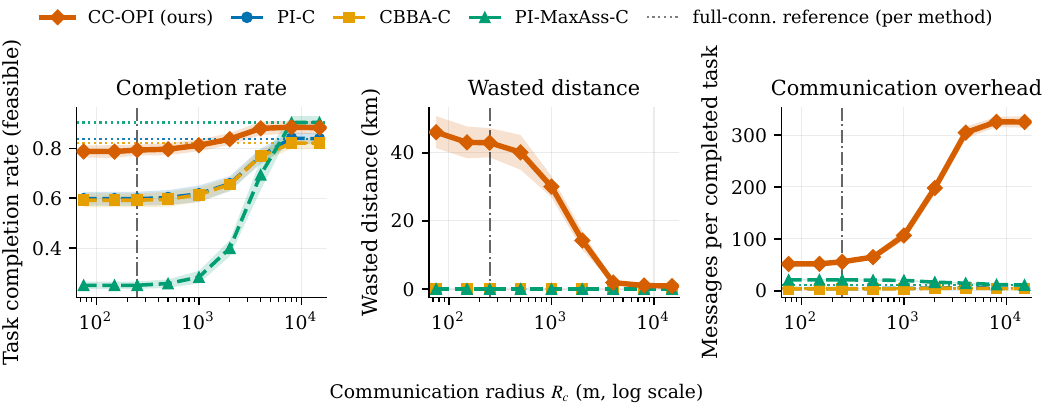}
\caption{Sensitivity to the communication radius $R_c$ at a fixed scale.}
\label{fig:radius}
\end{figure*}

\subsection{Robustness to Packet Loss, Terrain Occlusion, and Dynamic Task Arrival}
\label{subsec:robustness}

\subsubsection{Link-Level Packet Loss}
\label{subsubsec:linkloss}
The evaluation protocol so far delivers every in-range message (Table~\ref{tab:assumptions}). We now relax this idealization: each delivery of a broadcast to one in-range receiver is dropped independently with probability $p \in \{0.1, 0.3, 0.5\}$, while broadcasts remain counted as sent, preserving the message-metric semantics. The study uses the representative tight-constraint scale ($10$ UAVs, $30$ tasks) at $R_c=250$~m on the same $30$ seeds and compares CC-OPI with the matched online baselines PI-OM and CBBA-OM under the same loss probability. The frozen static baselines are omitted: their single consensus round operates on the typically empty initial graph at this radius (Section~\ref{subsec:radius}), so message loss leaves them essentially unchanged. Fig.~\ref{fig:linkloss} reports the results.

\begin{figure}[tbp]
\centering
\includegraphics[width=0.92\columnwidth]{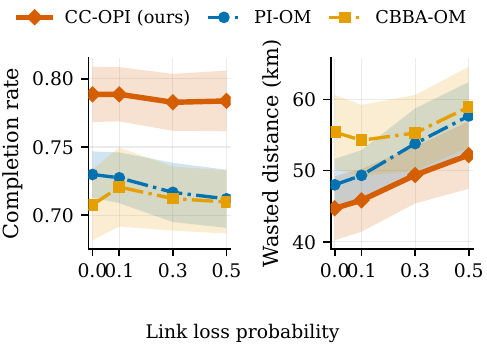}
\caption{Robustness to link-level packet loss at a fixed scale.}
\label{fig:linkloss}
\end{figure}

CC-OPI is nearly indifferent to the loss rate. Its TCR remains at $0.78$--$0.79$ across the sweep---the paired difference against the loss-free runs is $-0.6$ percentage points ($95\%$ CI $[-1.7, +0.5]$) at $30\%$ loss and $-0.5$ ($[-1.6, +0.7]$) at $50\%$, indistinguishable from zero even when half of all deliveries are dropped---and the emergency rescue rate stays at about $0.74$. The cost surfaces in redundant travel: the wasted distance grows by $4.7$~km ($[+1.2, +8.3]$) at $30\%$ and by $7.5$~km ($[+4.2, +11.1]$) at $50\%$ loss, about $17\%$ above the loss-free level, because lost messages defer conflict resolution until the UAVs meet again or converge physically. The messages per completed task even decline slightly (from about $55$ to $51$): fewer received messages trigger fewer state changes and hence fewer event-driven rebroadcasts.

The margin over the matched baselines widens rather than erodes. PI-OM degrades significantly ($-1.8$ points, $[-3.0, -0.7]$, at $50\%$ loss), so the paired advantage of CC-OPI grows from $+5.9$ points ($[+3.9, +7.6]$) without loss to $+7.2$ ($[+4.9, +9.3]$) at $50\%$, and remains between $+6.8$ and $+8.1$ points over CBBA-OM throughout. Notably, CBBA-OM improves slightly under mild loss ($+1.3$ points, $[+0.1, +2.6]$, at $10\%$), its median reassignment count falling from $660$ to $113$: random loss happens to interrupt the bid oscillation that commitment protection suppresses by design (Section~\ref{subsubsec:backbone})---a symptom of the pathology rather than a virtue, since its TCR still trails CC-OPI by about seven points.

This tolerance is structural: no CC-OPI mechanism assumes that a message arrives. The monotone completed set recovers a lost completion fact at any later contact, the timestamped consensus rules reconcile late-arriving state correctly, the periodic heartbeat acts as a natural retransmission, and the state lock ties physical commitment to position rather than to message arrival. A lost delivery therefore translates into delay---and ultimately into the moderate additional travel quantified above---rather than into an inconsistent allocation.

Burst-correlated loss and delivery delay do not alter this picture. Replacing the independent drops with a two-state Gilbert--Elliott gate per link---parameterized so that the stationary loss equals each tested level while outages arrive in bursts of five seconds on average---leaves the TCR of CC-OPI statistically unchanged at every level (paired difference at $50\%$ mean loss: $-0.2$ points, $95\%$ CI $[-1.3, +0.8]$), while PI-OM again degrades ($-1.2$ $[-2.0, -0.4]$) and the margin over PI-OM widens to $+6.8$ $[+4.9, +8.6]$. Subjecting every surviving delivery to a uniform random delay of up to ten seconds---under which messages arrive out of order---likewise leaves the TCR of CC-OPI unchanged ($-0.1$ $[-1.2, +1.0]$ at the harshest level) at the cost of $5.4$~km $[+1.3, +9.4]$ of additional redundant travel: the same structural properties reconcile late, out-of-order state correctly. The full protocols and per-level results are given in the supplementary material. Interference, contention, and channel capacity remain unmodeled (Table~\ref{tab:assumptions}).

\subsubsection{Terrain Occlusion}
Terrain can additionally block line-of-sight links: a communication edge exists only if the two UAVs are within range \emph{and} no obstacle intersects the segment between them. In an isolated communication-occlusion test---conducted at a radius large enough that the obstacle-free network is essentially fully connected, since at $R_c=250$~m the distance cutoff alone already disconnects almost every pair---sweeping from $0$ to $10$ large random obstacles lowers the TCR of CC-OPI only modestly, from about $0.88$ to $0.87$ (paired difference $-1.6$ percentage points, $95\%$ CI $[-2.7, -0.5]$), while the wasted distance rises roughly fivefold, because conflicts that early consensus would resolve are instead discovered later through physical convergence. Occlusion blocks communication only; physical motion follows straight lines without obstacle avoidance. The full protocol and per-metric results are given in the supplementary material (Fig.~S2).

\subsubsection{Dynamic Task Arrival}
\label{subsec:dynamic}
Search and rescue missions must also cope with tasks that emerge after deployment, such as newly discovered survivors. The event-driven design of CC-OPI accommodates this without network-wide suspension: a newly injected task is announced as a new-task wake-up event and processed as an ordinary local trigger, with the same priority as a topology change or a completion. We adopt a global-discovery model, in which the \emph{existence} of an injected task is announced to all UAVs upon arrival, while its \emph{allocation} remains subject to the communication constraints. We do not claim dynamic arrival as a core capability; rather, we use it to assess the robustness of the online paradigm to runtime changes.

We fix the tight-constraint setting at $10$ UAVs and $30$ base tasks at the representative radius $R_c=250$~m with clear terrain, and inject an increasing number of additional tasks during the mission at a fixed cadence (one task every $120$~s); each injected task is given a deadline relative to its arrival time. Fig.~\ref{fig:dynamic} reports the overall TCR and the injected-task completion rate as the number of injected tasks grows from $0$ to $10$ (a $33\%$ increase in total workload); the associated costs are quoted below.

The overall TCR remains essentially flat, between $0.75$ and $0.79$, as the injected workload grows. Because the injected tasks enter both the numerator and the denominator of this metric, we also examine the base tasks separately: their completion rate declines by $0.7$ percentage points at two injections (paired against the injection-free runs, $95\%$ CI $[-1.3, -0.2]$) and settles at $-1.8$ points ($[-2.7, -1.0]$) from six injections onward---indeed, the per-seed base-task outcomes at six and ten injections are identical, a direct between-level difference of exactly zero. Runtime arrivals therefore displace the existing allocation only by a small amount over the tested $0$--$10$ injection levels---the expected price of admitting new tasks under a shared capacity---rather than destabilizing it. For the injected tasks, since each run contributes only a few, we report the completion rate pooled across seeds per injection level, with a $95\%$ confidence interval from a run-level cluster bootstrap that resamples whole runs and thus preserves within-run correlation (a pooled Wilson interval \cite{brown2001interval} gives nearly identical bounds). The pooled rate rises from $0.47$ (CI $[0.33,0.60]$ over $60$ injected tasks) at two injections to $0.77$ (CI $[0.73,0.81]$ over $300$ tasks) at ten; the direct between-level differences are positive with intervals excluding zero ($+18.9$ points $[+8.3, +28.9]$ from two to six injections and $+11.4$ $[+8.2, +14.7]$ from six to ten, paired seed-cluster bootstrap), so the increase is supported by a direct test. The upward trend reflects the injection schedule as much as the algorithm: at the fixed $120$~s cadence, a higher injection count also extends the arrival window, so more tasks arrive after the initial workload has largely cleared and meet more available capacity. We therefore do not read the trend as evidence that higher intensities are inherently easier to absorb. The cost scales with the added workload---the total distance, makespan, and reassignment count all increase as more tasks are served---whereas the communication overhead per completed task stays nearly constant (about $54$--$58$ messages), indicating that the per-task coordination effort does not inflate with the injection load. Overall, these results indicate that, under the tested fixed-cadence schedule, the event-driven paradigm handles runtime task arrival gracefully, in contrast to the static paradigm, which would require a network-wide replanning pause that is unachievable under partitioned connectivity.

\begin{figure}[tbp]
\centering
\includegraphics[width=0.92\columnwidth]{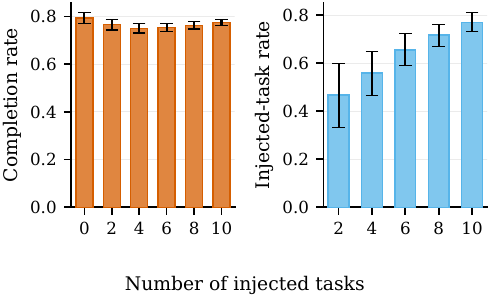}
\caption{CC-OPI under dynamic task injection at a fixed scale.}
\label{fig:dynamic}
\end{figure}

\section{Conclusion and Future Work}
\label{sec:conclusion}

This paper addressed distributed task allocation under restricted communication, where a limited communication radius fragments the UAV swarm into transient information islands and invalidates the static ``allocate-then-execute'' paradigm. We proposed CC-OPI, an event-driven online algorithm that interleaves task negotiation with physical execution rather than separating the two. CC-OPI rests on three components: an asynchronous online framework that replans only at discrete physical and topological events; a pair of cost-evaluation metrics tailored to dynamic topologies---a dynamic IPI with a spatial locality penalty and an effective RPI with deadline-aware urgency---complemented by a non-preemptive state lock that protects each UAV's ongoing physical action; and a decentralized fault-tolerance layer that combines version-based state synchronization with a global-time-driven emergency pool. We further established that CC-OPI terminates in finite time, free of stale-completion deadlock and of unbounded reassignment within the mission horizon, at a per-wake-up cost that is polynomial in the problem size (Section~\ref{sec:theoretical}).

Beyond these guarantees, extensive simulations characterize the practical performance of CC-OPI. At the representative communication radius of $250$~m, CC-OPI sustains a TCR of about $0.80$ that stays essentially flat across problem scales, exceeding the naively transferred static baselines by roughly $18$ to $22$ percentage points while remaining comparable to CBBA and within roughly six percentage points of PI, both evaluated under full connectivity. A matched online comparison decomposes this margin: executing the unmodified PI rules in the same event-driven regime recovers about $12$ points over the frozen transfer, and CC-OPI's mechanisms add a further $7$ points ($95\%$ CI $[+6.1, +7.8]$, replicated on fresh hold-out seeds) while incurring less wasted travel and fewer messages per completed task than the matched baselines. As the radius shrinks, CC-OPI degrades gracefully where the static baselines collapse, leading every constrained baseline evaluated until the network is nearly complete; it also absorbs the tested link-level packet loss, terrain occlusion, and runtime task arrival with only a modest loss in TCR. These gains are not free. Because CC-OPI coordinates continuously throughout execution, it incurs a higher communication volume---roughly an order of magnitude more application-layer broadcasts and transmitted bytes per completed task than the static methods, although in successful deliveries it is cheaper than the full-connectivity consensus (Section~\ref{sec:simulation})---together with some redundant travel from online reassignment. CC-OPI therefore favors robustness under intermittent connectivity over optimality under idealized assumptions, and this cost, together with the simplified physical model, delineates the boundaries of the present study.

These boundaries point to two complementary avenues for future work. The first targets the coordination cost of the online paradigm. As reported in Section~\ref{sec:simulation}, CC-OPI trades a higher communication volume and some redundant travel for its robustness, and narrowing this gap---for example through adaptive broadcast suppression, message compression, and leaner conflict resolution---is a direct line of improvement. Adapting the algorithm's parameters to the operating conditions offers a further lever in the same direction: although a single fixed configuration is robust across all evaluated scenarios (Section~\ref{sec:simulation}), the state-lock ablation also shows that the benefit of individual mechanisms varies with the connectivity regime. A policy that adjusts the spatial, urgency, and locking parameters to the observed connectivity---learned, for instance, through multi-agent reinforcement learning---could therefore reduce the coordination overhead below what any fixed configuration achieves.

The second avenue concerns the physical fidelity of the model. The present kinematics assume unlimited endurance and constant-speed straight-line motion: the termination horizon is bounded by task deadlines alone, with no energy constraint, and terrain occlusion affects only connectivity rather than the flight path. Incorporating energy budgets together with return-to-base recharging would couple the battery state with the allocation decision, so that a detour consumes a finite resource rather than time alone. Integrating obstacle-aware path planning would in turn allow terrain to constrain the physical trajectory as well, unifying communication blockage and motion blockage within a single model. Resilience to UAV failure belongs to the same agenda: although the fault-tolerance layer is motivated in part by the loss of task owners, mid-mission UAV failures are not exercised in our experiments, and evaluating how gracefully the swarm absorbs them remains open. We regard these extensions as necessary steps toward deploying the online paradigm under field conditions.

\section*{Data and Code Availability}
The complete simulation platform, all experiment runners, the raw per-run result data, and the plotting and statistical-analysis scripts that produce every figure, table, and confidence interval in this paper are publicly available at \url{https://github.com/bdathe-lb/CC-OPI-Exp} (release \texttt{v1.0-tmc}, MIT license). The repository documents the pinned software environment (Python~3.12), the seed protocol (seeds $42$--$71$, shared across all experiments), the reachable-task exclusion criterion, the runtime of each experiment suite, and the mapping from each raw data file to its figure or table; the committed raw data allow the analysis and plots to be reproduced without re-running the simulations. The matched online wrappers PI-OM and CBBA-OM are released in the same repository as first-class reproducible baselines.

\bibliographystyle{IEEEtran}
\bibliography{refs}

\vfill

\end{document}


\title{Supplementary Material for\\ ``CC-OPI: Online Distributed Task Allocation for UAV Swarms under Communication Constraints''}

\author{Liu Biao and Zhang Tong}

\maketitle

\noindent This document provides the complete proofs of Corollary~1, Lemma~3, Theorem~1, and Corollary~2 of the main paper, whose statements are repeated here for convenience, together with the supplementary trade-off figure (Fig.~\ref{fig:tradeoff_supp}), the full terrain-occlusion robustness experiment (Fig.~\ref{fig:occlusion_supp}), the full hyperparameter-sensitivity study (Fig.~\ref{fig:hyperparam}), the communication-graph statistics of the simulated mobility process (Table~\ref{tab:graph_stats}), the capability mapping of distributed allocation methods (Table~\ref{tab:capability}), the hold-out seed validation of the headline comparisons, the burst-loss and delivery-delay robustness experiments (Table~\ref{tab:linkstress}), and the scaling study (Table~\ref{tab:scaling}), all referenced in Sections~II and~VI of the main paper. All notation, definitions, and assumptions (A1)--(A5) follow Section~V of the main paper; equation, lemma, corollary, and remark numbers below refer to the main paper.

\begin{corollary}[Deadlock-freedom within a component]
Within any connected component whose members remain connected until the completed-set synchronization finishes, no completed task is re-included by its members; hence no coordination deadlock arises from stale completion information inside such a component.
\end{corollary}

\begin{IEEEproof}
Once a task is completed, its identity is added to the completed set of some member; by the version-based synchronization of Section~IV of the main paper, it propagates to every other member that remains in the same component within a finite number of synchronization rounds and, by Lemma~2, is never dropped thereafter. A member that still holds the task evicts it from its path upon synchronization and permanently excludes it from subsequent inclusion. Hence no member re-includes a completed task, and the component cannot deadlock on stale completion information. If the component splits before the synchronization finishes, the completion fact is not lost: it continues to spread along later opportunistic contacts (the gossip semantics of Section~IV of the main paper), reaching any UAV to which a temporal contact path exists. Stale information held across disconnected components is resolved by the consensus rules once the components merge, as noted in Remark~1 of the main paper.
\end{IEEEproof}

\begin{lemma}[Eventual commitment]
Let $u_i$ move toward its primary target $\tau$, the head of its path. Once the remaining distance falls below $d_{lock}$, $\tau \in \Omega_{lock}$ and $RPI_{eff}(\tau) = -\infty$; thereafter $\tau$ is neither preempted through consensus nor displaced from the head of the path. The task $\tau$ then leaves the path of $u_i$ within $d_{lock}/v_{min} + s_\tau$, either served by $u_i$ or, if another UAV is already serving it, dropped by $u_i$ upon arrival. Moreover, if $\tau$ is deadline-feasible at the instant it is locked, this resolution is an on-time completion by exactly one UAV.
\end{lemma}

\begin{IEEEproof}
When $\tau$ is locked, its effective score $RPI_{eff}(\tau) = -\infty$ is the strongest possible retention bid, so no neighbor can win $\tau$ away from $u_i$ under the consensus rules; the removal phase reclaims a locked task rather than releasing it, and the non-preemption constraint forbids the inclusion phase from inserting any task ahead of a locked head. Hence $\tau$ remains the head of the path of $u_i$ until physically resolved. The remaining distance is below $d_{lock}$ and the speed is at least $v_{min}$, so $u_i$ reaches $\tau$ within $d_{lock}/v_{min}$. Two UAVs in the same component may both lock $\tau$, since locking is a local decision; in that case the consensus tie is broken by the order in~(A5), yet the loser still retains $\tau$ locally and continues toward it. Upon arrival, $u_i$ either finds $\tau$ unserved and serves it within $s_\tau$ (the service being non-interruptible), or finds it already served and drops it; in either case $\tau$ leaves the path of $u_i$ within $d_{lock}/v_{min} + s_\tau$. Finally, the estimated arrival time $A_{i,k} = t + dist_{rem}/v_i$ is invariant under direct flight at constant speed, so a task deadline-feasible at the instant of locking remains feasible until served, and its resolution is an on-time completion by exactly one UAV. A head already deadline-infeasible when locked is still resolved within the same bound, but not as an on-time completion: the UAV proceeds to it and, upon arrival, either drops it (if already served) or serves it late---a physical completion whose service starts after the deadline and is therefore excluded from the completion metric (Section~VI of the main paper). Either outcome removes the head from the path, which suffices for termination.
\end{IEEEproof}

\begin{theorem}[Finite-time termination]
Under Assumption~1, CC-OPI reaches a terminal state within finite physical time and performs finitely many operations. Recalling the map diagonal $D_{map}$, the termination time satisfies
\begin{equation}
\label{eq:tstop_supp}
    T_{stop} \le T_{max} := \max_k D_k + \Bigl(\max_i C_i\Bigr)\!\left(\frac{D_{map}}{v_{min}} + \max_k s_k\right).
\end{equation}
\end{theorem}

\begin{IEEEproof}
\emph{(i) Bounded horizon.} By (A3), once $t > \max_k D_k$, every task is resolved: it is either already completed or expired, so no live task remains. Beyond this time no task is deadline-feasible, so the inclusion phase admits no new task and every path can only shrink. Each UAV then drains its remaining path of length at most $\max_i C_i$ by physically reaching its head task in turn; each such step takes at most $D_{map}/v_{min}$ of travel, since any leg is at most $D_{map}$ long and the speed is at least $v_{min}$, plus at most $\max_k s_k$ of non-interruptible service (Lemma~3), after which the task leaves the path. Once every path is empty and no UAV is serving, the terminal state is reached, which yields~\eqref{eq:tstop_supp}.

\emph{(ii) Finite computation.} By (A4), physical time advances in steps $\Delta t > 0$, so the number of frames is at most $T_{max}/\Delta t$. In each frame, every awakened UAV invokes the solver once, which is bounded by Lemma~1. The total number of operations is therefore finite.

\emph{(iii) Productive progress.} Beyond termination, the progress is productive. The task set is fixed and resolution is permanent, so $\Phi(t)$ is non-increasing; Corollary~1 with Lemma~2 precludes stale-completion deadlock within a component; and Lemma~3 ensures that a commitment reaching the $d_{lock}$ neighborhood is resolved within a bounded time---by an on-time completion whenever the task is still feasible when locked---rather than reversed. Hence the horizon is spent productively reducing $\Phi(t)$ rather than being lost to idling, deadlock, or oscillation.
\end{IEEEproof}

\begin{corollary}[Deadlock- and livelock-freedom]
Within the execution horizon of Theorem~1, CC-OPI admits neither coordination deadlock arising from stale completion information nor an unbounded reassignment sequence.
\end{corollary}

\begin{IEEEproof}
Deadlock-freedom within a component is Corollary~1, and stale information across components is resolved upon merging (Remark~1 of the main paper). For livelock-freedom, parts (i)--(ii) of Theorem~1 bound the number of wake-ups over $[0, T_{max}]$ by $N \lceil T_{max}/\Delta t \rceil$, and each wake-up modifies a path by a bounded amount (Lemma~1), while a locked commitment is irreversible (Lemma~3). The emergency bonus operates within this finite wake-up budget and hence introduces no unbounded re-attraction. The total number of reassignments is therefore finite.
\end{IEEEproof}

\section*{Supplementary Figure}

Fig.~\ref{fig:tradeoff_supp} places every method of the main performance comparison (Section~VI of the main paper) in a single completion-versus-cost plane at the representative scale.

\begin{figure}[!t]
\centering
\includegraphics[width=\columnwidth]{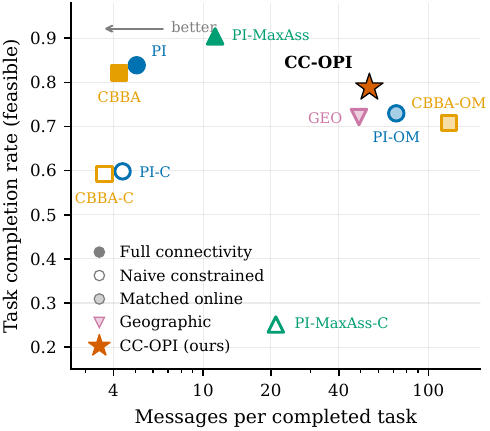}
\caption{Robustness-versus-cost trade-off at a representative scale ($10$ UAVs, $30$ tasks, tight-constraint setting).}
\label{fig:tradeoff_supp}
\end{figure}

\section*{Supplementary Experiment: Terrain Occlusion}

This section reports the full protocol and results of the terrain-occlusion robustness test summarized in Section~VI-F of the main paper.

Beyond the communication radius, terrain can block line-of-sight links: a communication edge exists only if the two UAVs are within range \emph{and} no obstacle intersects the segment between them. We study how CC-OPI tolerates such occlusion. Because occlusion only removes a link that is already within range, its effect is masked at the representative radius $R_c=250$~m, where the distance cutoff alone already disconnects almost every pair. To isolate the occlusion factor, we therefore set the radius large enough that the obstacle-free network is essentially fully connected, and then sweep the occlusion intensity by introducing an increasing number of large, randomly placed circular obstacles (radius $500$--$1200$~m) at the fixed scale of $10$ UAVs and $30$ tasks under the tight-constraint setting, averaging over the same $30$ seeds as the main experiments. We emphasize that occlusion blocks communication only: physical motion follows straight lines without obstacle avoidance, so tasks never become physically unreachable and only coordination is affected. The comparison is internal to CC-OPI, isolating the effect of occlusion rather than comparing algorithms. Fig.~\ref{fig:occlusion_supp} reports the results.

As the number of obstacles increases from $0$ to $10$, the task completion rate of CC-OPI declines only modestly, from about $0.88$ to $0.87$ (paired difference at ten obstacles: $-1.6$ percentage points, $95\%$ CI $[-2.7, -0.5]$---small but statistically resolved). The coordination cost, however, is more visible: the wasted distance rises consistently, from about $0.9$~km to $4.8$~km---a roughly fivefold increase---while the communication overhead falls from about $325$ to $251$ messages per completed task. This pattern admits a coherent explanation. Occlusion removes communication links, so conflicting claims that would otherwise be resolved through early consensus are instead discovered later, once UAVs converge in physical space, which incurs additional redundant travel; at the same time, fewer reachable neighbors mean fewer broadcasts. The makespan is essentially unaffected. In this isolated communication-occlusion test, CC-OPI thus absorbs line-of-sight blockage primarily as a moderate increase in redundant travel rather than as failed tasks.

\begin{figure}[!t]
\centering
\includegraphics[width=\columnwidth]{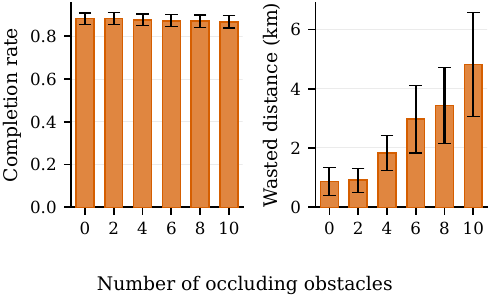}
\caption{Robustness of CC-OPI to terrain occlusion at a fixed scale ($10$ UAVs, $30$ tasks, tight-constraint setting).}
\label{fig:occlusion_supp}
\end{figure}

\section*{Supplementary Experiment: Hyperparameter Sensitivity}

This section reports the full protocol and per-parameter results of the hyperparameter-sensitivity study summarized in Section~VI-D of the main paper.

CC-OPI exposes a small set of hyperparameters that were fixed to a single configuration across every experiment of the main paper and never tuned per scenario. To verify that the reported conclusions are not an artifact of this particular choice, we vary the three free strength and threshold parameters one at a time, holding the others at their defaults: the spatial-penalty strength $\lambda$, the urgency-penalty weight $\beta$, and the emergency-pool margin $T_{emergency}$. The remaining parameters are fixed by physical or temporal reasoning rather than swept---the spatial exponent $\gamma=1$ (a deliberately linear penalty), the urgency exponent $\kappa$ and onset $t_{safe}$, and the lock distance $d_{lock}=50$~m (an irreversibility radius of about $1.7$~s and $1.0$~s of flight at the two cruising speeds). The attractiveness bonus $\mathcal{A}_{bonus}$ acts as a soft infinity whose only requirement is to be large enough to favor a flagged task in the inclusion ranking: a separate sweep over $\mathcal{A}_{bonus}\in[500, 20000]$ produces identical results at every value, and removing the bonus entirely ($\mathcal{A}_{bonus}=0$) lowers the TCR and the emergency rescue rate by less than one percentage point each---the cross-regional rescues are carried by the spatial-penalty exemption ($\mu=1$) rather than by the bonus. We use the tight-constraint setting at $10$ UAVs and $30$ tasks, the representative radius $R_c=250$~m, and clear terrain, averaging over the same $30$ seeds; shaded bands denote $95\%$ confidence intervals of the mean, and the standard error of the mean (SEM) of the TCR is about one percentage point. Fig.~\ref{fig:hyperparam} reports, for each parameter, the metric that carries its conclusion.

\emph{The TCR is insensitive to $\lambda$, whose benefit appears in the coordination cost and saturates quickly.} As $\lambda$ varies over $[0,50]$, the TCR remains near $0.80$ within a single SEM, so the spatial penalty does not trade away completions. Its effect is instead visible in the efficiency metrics (Fig.~\ref{fig:hyperparam_lambda}): disabling it ($\lambda=0$, which coincides with the ``w/o Spatial'' ablation variant of Section~VI-C of the main paper) raises the wasted distance from about $43$ to $48$~km ($+11\%$) and the reassignment count from about $19$ to $23$ ($+18\%$), consistent with the ablation. Both metrics flatten for $\lambda\gtrsim5$ and stay essentially constant through $\lambda=50$, so the default $\lambda=10$ lies on a broad plateau rather than at a sharp optimum.

\emph{The urgency weight $\beta$ has only a weak effect.} Across $\beta\in[0,2000]$ the TCR varies by about one percentage point---within one SEM (Fig.~\ref{fig:hyperparam_beta})---while the wasted distance grows only mildly with $\beta$. The default $\beta=500$ therefore sits in a stable region; we do not claim it to be optimal, and the insensitivity itself indicates that the reported performance does not hinge on this value.

\emph{The emergency margin $T_{emergency}$ is the one genuinely sensitive parameter, and the default lies within its safe region.} For $T_{emergency}\le700$~s the TCR stays near $0.80$, but it degrades to about $0.78$ at $1000$~s and $0.74$ at $1500$~s (Fig.~\ref{fig:hyperparam_temrg}), a drop of roughly five SEM. The trend follows the mechanism: an overly large margin flags too many tasks as emergencies too early, so the spatial-penalty exemption ($\mu=1$) applies to a large fraction of tasks and erodes the regionalized operation that suppresses blind cross-regional chasing; the emergency rescue rate falls in step. The accompanying decrease in total distance and makespan does not reflect greater efficiency but fewer completed tasks. The default $T_{emergency}=700$~s sits just inside the safe plateau; the marginally higher rate at $500$~s is within statistical error and we do not read it as a better setting.

Taken together, these sweeps indicate that the reported conclusions are robust to CC-OPI's hyperparameters: the TCR is insensitive to $\lambda$ and $\beta$ within statistical error, and $T_{emergency}$ admits a wide safe region around the default and degrades gradually even when grossly mis-specified.

\begin{figure*}[!t]
\centering
\subfloat[Spatial-penalty strength $\lambda$ (default $10$)]{\includegraphics[width=0.30\textwidth]{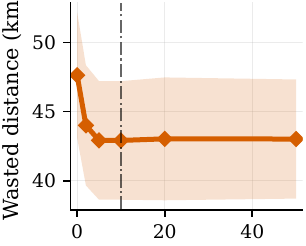}\label{fig:hyperparam_lambda}}
\hfil
\subfloat[Urgency-penalty weight $\beta$ (default $500$)]{\includegraphics[width=0.30\textwidth]{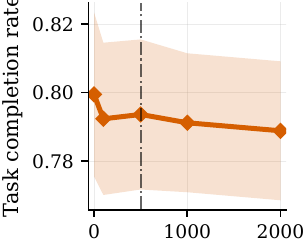}\label{fig:hyperparam_beta}}
\hfil
\subfloat[Emergency-pool margin $T_{emergency}$ (default $700$~s)]{\includegraphics[width=0.30\textwidth]{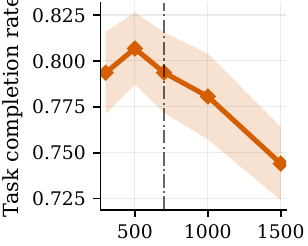}\label{fig:hyperparam_temrg}}
\caption{Hyperparameter sensitivity of CC-OPI at a fixed scale ($10$ UAVs, $30$ tasks, tight-constraint setting).}
\label{fig:hyperparam}
\end{figure*}

\section*{Supplementary Table: Communication-Graph Statistics}

Table~\ref{tab:graph_stats} characterizes, in graph terms, what each communication radius means in the simulated mobility process (Section~VI-A of the main paper), so that the results can be mapped to other environments. All quantities are averaged over the $30$ seeds of CC-OPI runs at the representative scale ($10$ UAVs, $30$ tasks, tight-constraint setting, clear terrain, matching the radius sweep of Section~VI-E); the statistics depend only on the UAV trajectories. At the representative setting with the default terrain ($R_c=250$~m), the values are essentially identical to the clear-terrain row (time-averaged mean degree $0.16$, largest-component fraction $0.15$, about $15$ contacts of median duration $12$~s per mission).

\begin{table}[!t]
\centering
\caption{Communication-graph statistics per radius ($10$ UAVs, $30$ tasks, clear terrain, means over $30$ seeds).}
\label{tab:graph_stats}
\renewcommand{\arraystretch}{1.2}
\setlength{\tabcolsep}{3.5pt}
\footnotesize
\begin{tabular}{rcccccc}
\toprule
$R_c$ (m) & \shortstack{Init.\\degree} & \shortstack{Avg.\\degree} & \shortstack{Avg.\\comp.} & \shortstack{Largest\\frac.} & \shortstack{Contacts\\per run} & \shortstack{Median\\contact (s)} \\
\midrule
$75$    & $0.00$ & $0.15$ & $9.28$ & $0.14$ & $12.8$ & $12$ \\
$150$   & $0.00$ & $0.16$ & $9.22$ & $0.15$ & $12.9$ & $13$ \\
$250$   & $0.00$ & $0.17$ & $9.15$ & $0.15$ & $14.4$ & $9$ \\
$500$   & $0.05$ & $0.21$ & $8.96$ & $0.17$ & $17.6$ & $17$ \\
$1000$  & $0.23$ & $0.38$ & $8.19$ & $0.22$ & $21.8$ & $42$ \\
$2000$  & $0.89$ & $1.10$ & $5.52$ & $0.38$ & $30.2$ & $177$ \\
$4000$  & $2.91$ & $3.17$ & $1.84$ & $0.83$ & $45.8$ & $424$ \\
$8000$  & $7.58$ & $7.83$ & $1.00$ & $1.00$ & $52.0$ & $1185$ \\
$15000$ & $9.00$ & $9.00$ & $1.00$ & $1.00$ & $45.0$ & $1290$ \\
\bottomrule
\end{tabular}
\end{table}

\section*{Supplementary Table: Capability Mapping of Distributed Allocation Methods}

Table~\ref{tab:capability} maps the distributed allocation methods discussed in Section~II of the main paper onto the six capability dimensions that decide viability under persistent partitioning: the connectivity premise, the coupling between planning and physical execution, the protection of physical commitment, the dissemination of completion state, the handling of runtime task arrival, and the communication pattern. Each entry reflects the method as characterized in the source cited in Section~II of the main paper; PI-OM and CBBA-OM are the matched online wrappers of Section~VI-A of the main paper, and the experimental comparison against a static geographic-partition baseline is reported in Section~VI-B of the main paper.

\begin{table*}[!t]
\centering
\caption{Capability mapping of distributed allocation methods under persistent partitioning.}
\label{tab:capability}
\renewcommand{\arraystretch}{1.25}
\setlength{\tabcolsep}{4pt}
\footnotesize
\begin{tabular}{@{}p{2.4cm}p{3.0cm}p{2.6cm}p{1.9cm}p{2.3cm}p{1.7cm}p{2.5cm}@{}}
\toprule
Method & Connectivity premise & Planning--execution coupling & Commitment protection & Completion dissemination & Runtime task arrival & Communication pattern \\
\midrule
CBBA (Choi et al.) & Jointly connected graph union over a bounded window & Allocate-then-execute (synchronized rounds) & None & Not applicable (allocation precedes execution) & Global replan & Per-round neighbor exchange of $O(M)$ state \\
PI (Zhao et al.) & Sufficient propagation over the (possibly sparse) topology within the negotiation horizon & Allocate-then-execute & None & Not applicable & Global replan & Per-round neighbor exchange \\
PI-MaxAss (Turner et al.) & Globally consistent view for the two-stage rescheduling & Allocate-then-execute & None & Not applicable & Global replan & Two-stage per-round exchange \\
ACBBA (Johnson et al.) & Eventual network-wide message delivery & Asynchronous, locally timed replanning & None & Implicit through bid consensus & Local trigger & Asynchronous broadcasts \\
Communication-limited auctions (Otte et al.) & Message loss tolerated during the auction; allocation completes before execution & Allocate-then-execute & None & Not applicable & Global replan & Repeated auction rounds \\
Local task exchange (Liu et al.) & Opportunistic neighbor contacts & Pairwise swaps on contact; scores partition-myopic & None & Pairwise state exchange & Not addressed & Neighbor-local exchanges \\
Time-freezing replanning (Chen et al.) & Network-wide pause signal reachable & Periodic global replanning, motion frozen & None & Within each replanning round & Global replan & Round-based re-consensus \\
PI-OM / CBBA-OM (this paper) & None (opportunistic contacts) & Allocate-while-executing (event-driven) & None & Monotone completed-set gossip & Local trigger & Event-driven broadcasts \\
CC-OPI (this paper) & None (persistent partitioning tolerated) & Allocate-while-executing (event-driven) & Non-preemptive state lock & Monotone completed-set gossip & Local trigger & Event-driven, suppression-gated broadcasts \\
\bottomrule
\end{tabular}
\end{table*}

\section*{Supplementary Experiment: Hold-Out Seed Validation}

All development, mechanism exploration, and hyperparameter sensitivity work used the $30$ seeds $42$--$71$. To verify that the reported effect sizes are not an artifact of the development instances, the headline comparisons were re-run on $30$ fresh hold-out seeds ($72$--$101$) that played no role in development or tuning, over all six problem scales of the tight-constraint setting, with the same protocol and the seed-cluster pooled inference of Section~VI-A of the main paper. PI-MaxAss variants are omitted; the validated effects are the naive-transfer margin, the paradigm gain, the mechanism gain, and the full-connectivity gap.

Every effect replicates. Pooled across scales, CC-OPI leads PI-OM by $+7.1$ percentage points ($95\%$ CI $[+6.1, +8.1]$; development seeds: $+6.9$ $[+6.1, +7.8]$) and CBBA-OM by $+9.1$ $[+8.0, +10.2]$; the naive-transfer margins are $+19.6$ $[+18.2, +20.9]$ over PI-C and $+19.8$ $[+18.4, +21.2]$ over CBBA-C; the paradigm gain of PI-OM over PI-C is $+12.5$ $[+11.4, +13.6]$; and the gaps to the full-connectivity references are $-4.2$ $[-5.1, -3.2]$ to PI and $-2.4$ $[-3.3, -1.6]$ to CBBA. At the representative scale, CC-OPI leads PI-OM by $+5.9$ $[+3.8, +7.9]$. Under $50\%$ independent delivery loss at the representative scale, the hold-out runs likewise reproduce the loss study: the TCR change of CC-OPI remains statistically indistinguishable from zero ($-1.2$ $[-2.4, +0.1]$), PI-OM degrades ($-2.3$ $[-3.7, -0.9]$), and the margin of CC-OPI over PI-OM stands at $+7.0$ $[+5.0, +8.9]$.

\section*{Supplementary Experiment: Burst-Correlated Loss and Delivery Delay}

This section reports the full protocols and per-level results of the burst-loss and delivery-delay robustness tests summarized in Section~VI-F of the main paper. Both use the representative tight-constraint scale ($10$ UAVs, $30$ tasks) at $R_c=250$~m on the same $30$ seeds as the independent-loss study, comparing CC-OPI with the matched online baselines PI-OM and CBBA-OM; the frozen static baselines are omitted for the reason given in the main paper. Paired differences are computed against the unperturbed runs of the same seeds.

\emph{Burst-correlated loss.} The independent per-delivery drop is replaced by a two-state Gilbert--Elliott gate per undirected link: a link in the Bad state drops every delivery of that frame, the chain parameters are set so that the stationary Bad probability equals the mean loss level $p\in\{0.1,0.3,0.5\}$ while the mean Bad dwell is five frames ($5$~s), and the per-frame state transitions consume random draws in a fixed link order, independent of the message pattern, so runs are reproducible. Each mean loss level is thus directly comparable with its independent-loss counterpart, but outages arrive in time-correlated bursts that can silence a link for an entire contact. The loss-free level bypasses the gate and is bit-identical to the loss-free rows of the independent-loss study.

\emph{Delivery delay and reordering.} Each surviving delivery is assigned an integer delay drawn uniformly from $\{0,\dots,D_{max}\}$ seconds, with $D_{max}\in\{2,5,10\}$; a delayed message is buffered and handed to its receiver at the arrival frame regardless of connectivity at that time (the transmission already occurred), so a later broadcast with a shorter draw overtakes an earlier one and messages arrive out of order. This directly exercises the mechanism claims under reordering: the timestamped consensus rules, the monotone completed-set merge, and the position-based state lock must remain correct when state arrives late and out of sequence. The $D_{max}=0$ level consumes no draws and is bit-identical to the loss-free rows.

Table~\ref{tab:linkstress} reports the results. At every level of both perturbations, the paired TCR change of CC-OPI is statistically indistinguishable from zero, and the margin over PI-OM is preserved or widens; PI-OM degrades under bursts at every level ($-1.2$ points, $95\%$ CI $[-2.0, -0.4]$, at $p=0.5$). CBBA-OM improves slightly under mild bursts ($+0.9$ $[+0.2, +1.7]$ at $p=0.1$), the same symptom observed under independent loss: random outages interrupt the bid oscillation that commitment protection suppresses by design. The cost to CC-OPI again surfaces as redundant travel rather than lost completions---$+5.5$~km $[+1.3, +9.7]$ at $50\%$ burst loss and $+5.4$~km $[+1.3, +9.4]$ at $D_{max}=10$~s.

\begin{table*}[!t]
\centering
\caption{Burst-loss and delivery-delay results at the representative scale (means over $30$ seeds; bracketed intervals are paired bootstrap $95\%$ CIs in percentage points).}
\label{tab:linkstress}
\renewcommand{\arraystretch}{1.25}
\setlength{\tabcolsep}{3.5pt}
\footnotesize
\begin{tabular}{lccccc}
\toprule
Level & \shortstack{CC-OPI\\TCR} & \shortstack{PI-OM\\TCR} & \shortstack{CBBA-OM\\TCR} & \shortstack{CC-OPI\\$\Delta$ vs clean} & \shortstack{CC-OPI\\$-$ PI-OM} \\
\midrule
none              & $0.789$ & $0.730$ & $0.708$ & --- & $+5.9$ \\
\addlinespace[2pt]
burst $p=0.1$     & $0.785$ & $0.724$ & $0.717$ & $-0.3$ $[-0.8,+0.1]$ & $+6.1$ $[+4.2,+7.8]$ \\
burst $p=0.3$     & $0.788$ & $0.720$ & $0.710$ & $-0.1$ $[-1.2,+1.1]$ & $+6.7$ $[+4.5,+8.9]$ \\
burst $p=0.5$     & $0.786$ & $0.718$ & $0.704$ & $-0.2$ $[-1.3,+0.8]$ & $+6.8$ $[+4.9,+8.6]$ \\
\addlinespace[2pt]
delay $D_{max}=2$~s  & $0.791$ & $0.722$ & $0.712$ & $+0.2$ $[-0.6,+1.1]$ & $+6.9$ $[+4.8,+8.8]$ \\
delay $D_{max}=5$~s  & $0.782$ & $0.723$ & $0.711$ & $-0.7$ $[-1.7,+0.4]$ & $+5.9$ $[+3.8,+7.9]$ \\
delay $D_{max}=10$~s & $0.787$ & $0.722$ & $0.711$ & $-0.1$ $[-1.2,+1.0]$ & $+6.6$ $[+4.4,+8.6]$ \\
\bottomrule
\end{tabular}
\end{table*}

\section*{Supplementary Experiment: Scaling}

This section reports the scaling study referenced in Section~VI-A of the main paper. CC-OPI is run at $32\times96$, $64\times192$, and $128\times384$ (UAVs $\times$ tasks), alongside the representative $10\times30$ for reference, under the tight-constraint setting over the same $30$ seeds---the largest tier being eight times the largest configuration of the main study in both dimensions. Every run executes in a fresh process, four runs in parallel on the eight-core reference machine (AMD Ryzen 7 5800H, 16~GB RAM), so that the timing figures are not inflated by contention; the per-wake-up latency is measured by wrapping the solver invocation with a monotonic clock and isolates the allocation cost, whereas the mission wall-clock additionally includes the simulator's physics and topology maintenance. Table~\ref{tab:scaling} reports the results.

\begin{table*}[!t]
\centering
\caption{Scaling of CC-OPI under the tight-constraint setting (means over $30$ seeds).}
\label{tab:scaling}
\renewcommand{\arraystretch}{1.25}
\setlength{\tabcolsep}{3.5pt}
\footnotesize
\begin{tabular}{lcccccc}
\toprule
Scale & TCR & \shortstack{Mission\\wall-clock (s)} & \shortstack{Wake-up\\latency (ms)} & \shortstack{Peak\\RSS (MB)} & \shortstack{Bcast.\\per task} & \shortstack{Payload\\(B/bcast.)} \\
\midrule
$10\times30$   & $0.789$ & $0.5$  & $0.18$ & $128$ & $54.7$ & $456$ \\
$32\times96$   & $0.820$ & $3.3$  & $0.58$ & $128$ & $64.7$ & $1440$ \\
$64\times192$  & $0.828$ & $12.5$ & $1.05$ & $129$ & $74.3$ & $2878$ \\
$128\times384$ & $0.845$ & $60.4$ & $1.89$ & $137$ & $99.1$ & $5760$ \\
\bottomrule
\end{tabular}
\end{table*}

The systems-level costs grow as the analysis of Section~V of the main paper predicts. The mean per-wake-up solver latency grows from $0.18$ to $1.89$~ms---a factor of ten for a $12.8$-fold increase in both $N$ and $M$, consistent with the $O(MC^3 + NM)$ per-wake-up bound---and remains below $2$~ms even at $128$ UAVs, so the allocation itself is far from being a real-time bottleneck. Peak resident memory is essentially flat ($128$ to $137$~MB, dominated by the interpreter baseline), reflecting the $O(M+N)$ per-UAV state. The mean payload per broadcast tracks the analytic $16M+8N+8$-byte bound ($7{,}176$~B at the largest tier; the observed mean of $5{,}760$~B and maximum of $5{,}805$~B stay below it because the monotone completed set never fills). Two quantities grow faster than linearly in the mission size: the broadcasts per completed task rise from $54.7$ to $99.1$, because a denser swarm on the fixed field produces more topology events and more receivers per broadcast, and the mission wall-clock reaches about one minute at the largest tier---the aggregate of roughly $28{,}000$ solver invocations across the swarm (about $53$~s), which a real deployment would execute in parallel on $128$ vehicles, the remainder being the simulator's physics and $O(N^2)$ topology maintenance. The completion quality does not deteriorate: the TCR rises from $0.789$ to $0.845$, since a denser deployment on the fixed field yields richer contact opportunities, and the redundant travel per UAV saturates at about $6$~km. We report these figures as simulation-scale evidence; they do not establish deployment-scale feasibility of the radio layer, which is outside the model.